\documentclass[sigconf,nonacm]{acmart}
\usepackage{soul,balance}
\usepackage{booktabs} %,amsthm,amsmath,amssymb} %,times} 	% to make equations look good
\usepackage{comment}
\usepackage{hyperref} % Allows for clickable references
\usepackage{graphicx} 	% Allows you to import images
\usepackage{float,wrapfig} 		    % Allows for control of float positions
\usepackage{mathtools,verbatim}      % For floor and ceil functions
\usepackage{caption}
\usepackage{subcaption}
\usepackage{color}
\usepackage{colortbl}
\usepackage{multicol}
\usepackage{xcolor}  % For colored cells in tables
\definecolor{lightgreen}{rgb}{0.56, 0.93, 0.56}  % Define light green
\definecolor{lightblue}{rgb}{0.68, 0.85, 0.90}
\usepackage{enumerate}
\usepackage{enumitem}
\usepackage[linesnumbered,ruled,vlined]{algorithm2e} 
\usepackage{multicol} % Required for the side-by-side layout

\usepackage[ruled, vlined,linesnumbered]{algorithm2e}

\usepackage{pifont}
\newcommand{\cmark}{\ding{51}}
\newcommand{\xmark}{\ding{55}}
\usepackage{multicol}
\usepackage{rotating}
\usepackage{graphicx}

\begin{document}

\title{%Scalable Fault-Tolerant Causal Delivery of Unicast Messages %icdcn
Byzantine-Tolerant Causal Unicast with Constant Message Space Overhead
%Scalable Byzantine-Tolerant Causal Delivery in Peer-to-Peer Systems
%Constant Message-Space Overhead for Byzantine-Tolerant Causal Delivery
}
%\subtitle{(Regular Paper)}
%\title{\LARGE \bf } % }

%\author{Anonymous Authors}
%\begin{comment}
\author{Purv Patel}
\affiliation{%
  \institution{University of Illinois Chicago}
  %\streetaddress{1 Th{\o}rv{\"a}ld Circle}
  \city{Chicago}
  \country{USA}
  }
\email{purvp2@uic.edu}

\author{Ajay D. Kshemkalyani}
\affiliation{%
  \institution{University of Illinois Chicago}
  %\streetaddress{1 Th{\o}rv{\"a}ld Circle}
  \city{Chicago}
  \country{USA}
  }
\email{ajay@uic.edu}
%\end{comment}

%\author{Anonymous authors}

%\ccsdesc{Theory of computation~Distributed algorithms}
%\ccsdesc[500]{Theory of computation~Distributed algorithms}
%\ccsdesc[500]{Theory of computation~Concurrent algorithms}
%\ccsdesc[300]{Computer systems organization~Dependable and fault-tolerant systems and networks}
%\keywords{Causal ordering, Unicasts, Byzantine fault-tolerance}

\begin{abstract}
Causal message ordering provides essential semantics for distributed applications, yet ensuring it within an asynchronous system subject to Byzantine failures presents fundamental theoretical and practical challenges. Prior research establishes that algorithms cannot guarantee both strong safety and liveness without using cryptography under these conditions. %without synchrony assumptions, while 
Existing Byzantine-tolerant solutions make synchrony assumptions or suffer from $O(n)$ message space overheads and $O(n^2)$ message complexity, where $n$ is the number of processes in the system, or use cryptography, but may not guarantee strong safety. In this paper, we present a novel Byzantine-tolerant causal ordering algorithm that achieves an optimal $O(1)$ application message space overhead for point-to-point messages. 
%Our approach builds upon the Sender Permission to Send (SPS) invariant of Almeida  \cite{almeida}, but fundamentally redesigns the memory management architecture into an \textit{Isolated-Buffer Optimistic Model}. To circumvent Byzantine state-pinning and head-of-line blocking attacks, we replace unified causal buffers with isolated, per-peer queues equipped with event-driven \textit{Cascading Space Evictions}. 
Our approach uses a Sender Permission to Send (SPS) invariant and an \textit{Isolated-Buffer Optimistic Model} for the memory management architecture. 
To circumvent Byzantine state-pinning and head-of-line blocking attacks, instead of unified causal buffers, we use isolated, per-peer queues equipped with event-driven \textit{Cascading Space Evictions}. 
%Our approach leverages an isolated-buffer optimistic architecture to decouple local memory management from global consensus, utilizing cascading space evictions to prevent Byzantine head-of-line blocking and state-pinning attacks. 
%To maintain causal integrity without appending linear histories, we separate communication into two planes: a lightweight data plane for application unicasts, and a control plane powered by an Asynchronous Byzantine Atomic Broadcast (ABAB) primitive. %The ABAB channel synchronizes a deterministic cryptographic hash ledger, allowing nodes to attach a constant-sized state proof to outgoing messages. 
% We formally prove that our algorithm guarantees liveness %and cryptographic integrity 
% by restricting causal safety violations exclusively to information laundering and premature optimistic evictions. This deliberate trade-off successfully balances causal ordering with high throughput, constant message space overhead, low computational overhead, and strictly bounded local space.
We formally prove that our algorithm guarantees system-wide liveness and satisfies a weakened causal safety abstraction called \textit{Congestion-Relaxed Causal Delivery}. Under this model, weak causal safety is strictly preserved for all honest-to-honest communications unless extreme network latency or Byzantine withholding attacks exceed quantifiable local buffer capacities, forcing optimistic queue evictions. This formal guarantee successfully balances causal ordering with high throughput, constant message space overhead, low computational overhead, and strictly bounded local space.
\end{abstract}

\ccsdesc[500]{Theory of computation~Distributed algorithms}
\ccsdesc[500]{Theory of computation~Concurrent algorithms}
\ccsdesc[300]{Computer systems organization~Dependable and fault-tolerant systems and networks}
\keywords{Byzantine fault-tolerance, Happens before, Causality, Causal order, Message-passing, Unicast}

\maketitle

\section{Introduction}

The demand for robust, highly available distributed systems has grown drastically with the rise of cloud computing, distributed data stores, and collaborative interactive applications. In these environments, establishing a consistent order of events is critical to maintaining system semantics \cite{birman1987reliable}. Causal message ordering ensures that the delivery of messages respects the \textit{happens-before} relationship \cite{LLclock}, allowing systems to provide strong consistency guarantees without the overhead of total synchronization.

%Guaranteeing causal delivery in an asynchronous system subject to Byzantine failures remains a challenge. In a Byzantine fault model, compromised processes can act arbitrarily, dropping messages, forging causal histories, or executing resource-exhaustion attacks by withholding required control messages. Recent theoretical results have proven that achieving strict causal ordering (strong safety) alongside guaranteed message delivery (liveness) is mathematically impossible in a deterministic, asynchronous Byzantine environment \cite{DBLP:journals/tpds/MisraK24}. Consequently, distributed systems must trade strong safety for liveness, or rely on synchronous assumptions such as bounded latency.

%Furthermore, existing causal ordering algorithms typically suffer from high message space overheads. Traditional algorithms require attaching $O(n^2)$ or $O(n)$ causal histories to every application message, severely degrading network throughput \cite{raynal1991causal, cykas, KS}. While recent advancements have reduced this ov erhead, achieving constant $O(1)$ message size in a purely asynchronous, Byzantine-tolerant model has remained elusive.

The evolution of causal delivery has yielded distinct solutions tailored for broadcast \cite{birman1987reliable,Newtop,Baldoni,Anastasi}, multicast \cite{Rodrigues,KS}, and unicast \cite{raynal1991causal,cykas,Mattern,Blessing} environments. %Early fault-tolerant group communication protocols, such as Newtop, provided causal broadcast by conservatively delaying delivery until a message was received from every process in the network \cite{Newtop}. As distributed applications evolved, algorithms were adapted to support multicast environments \cite{Schiper1989}. However, 
Broadcast solutions cannot be naively applied to point-to-point unicast networks; multicast solutions if applied suffer from inefficiencies. %Utilizing a broadcast algorithm for unicast communication forces every private point-to-point message to be globally transmitted to the entire network, generating prohibitive communication overhead. Furthermore, attempting to bridge this gap by simulating a multicast via a sequence of independent unicast messages breaks causal integrity entirely; a receiver of the first simulated unicast could generate and send a new causally-dependent message that arrives at a third process before the remainder of the original simulated sequence. Thus, a 
A dedicated unicast algorithm must manage causal dependencies without relying on global message visibility or sequence simulation.
To enforce these causal dependencies, dedicated algorithms traditionally rely on either receiver-side or sender-side buffering. Receiver-buffering algorithms append causal history metadata to each message and suffer from a %massive
$O(n^2)$ metadata overhead, where $n$ is the number of processes in the system, \cite{raynal1991causal, KS}. Sender-side buffering algorithms  \cite{Mattern, cykas} suffer from non-causal latency penalties and severe throughput bottlenecks. % where 
Here {\it throughput} is the rate of computation performed or of messages handled and which depends on the number of messages in transit allowed per sender and independence from network latency between processes.
To circumvent the heavy metadata overhead of receiver-buffering without adopting the throughput limitations of sender-buffering, 
some algorithms leverage the underlying pphysical or logical topology of the network \cite{Blessing,Rodrigues} to naturally order messages in transit. Other protocols rely on specific acknowledgment patterns and blocking communication structures to maintain consistency \cite{Baldoni}, which can inadvertently introduce hidden liveness issues and deadlocks \cite{Anastasi}. %In this paper, we explicitly avoid these 
Such topology-dependent and blocking designs %. By remaining strictly topology-agnostic, our algorithm prevents the creation of 
create centralized network bottlenecks and incur %avoids the 
complex reconfiguration overhead when nodes fail. %, making %it highly
%them unsuitable for scalable and dynamic decentralized environments.

Guaranteeing causal delivery in an asynchronous system subject to Byzantine failures remains a challenge. In a Byzantine fault model, compromised processes can act arbitrarily, dropping messages, forging causal histories, or executing resource-exhaustion attacks by withholding required control messages. Recent theoretical results have proven that achieving strict causal ordering (strong safety) alongside guaranteed message delivery (liveness) without using cryptography is mathematically impossible in an  asynchronous Byzantine system  \cite{DBLP:journals/tpds/MisraK24,DBLP:conf/icdcn/MisraK23,DBLP:conf/netys/MisraK22}. Consequently, distributed systems must trade strong safety for liveness, or rely on synchrony assumptions such as bounded latency.
The existing deterministic Byzantine-tolerant causal ordering algorithm for unicasts in asynchronous systems uses cryptography and suffers from high message space overheads, degrading network throughput \cite{minicast}. While recent advancements for crash-tolerant systems have low message overhead \cite{almeida,cykas}, achieving constant $O(1)$ message size in a purely asynchronous, Byzantine-tolerant model has remained elusive.
\begin{comment}
Furthermore, existing causal ordering algorithms typically suffer from high message space overheads. Traditional algorithms require attaching $O(n^2)$ or $O(n)$ causal histories to every application message, degrading network throughput \cite{raynal1991causal, cykas, KS}. While recent advancements have reduced this overhead \cite{almeida}, achieving constant $O(1)$ message size in a purely asynchronous, Byzantine-tolerant model has remained elusive.
\end{comment}

\subsection*{Contributions}
In this paper, we present a novel $O(1)$-message overhead Byzantine-tolerant causal ordering algorithm. Our approach builds upon the space-optimal Sender Permission to Send (SPS) invariant of Almeida  \cite{almeida}, but fundamentally redesigns the memory management architecture into an \textit{Isolated-Buffer Optimistic Model}. To circumvent Byzantine state-pinning and head-of-line blocking attacks, we replace unified causal buffers with isolated, per-peer queues equipped with event-driven \textit{Cascading Space Evictions}. By managing these isolated memory bounds completely locally, our protocol eliminates the dependency on heavy, multi-round cryptographic consensus primitives or global broadcast abstractions. Instead, the entire architecture operates over a single point-to-point interconnect utilizing efficient unicast primitives for all data and signaling plane operations, including application payloads ($\text{MSG}$), acknowledgments ($\text{ACK}$), and authorization clearances ($\text{PERMIT}$). 

We formally prove that our protocol guarantees deterministic system-wide liveness for honest-to-honest communications in a purely asynchronous environment while satisfying the \textit{Congestion-Relaxed Causal Delivery} abstraction. Under this model, causal safety is strictly preserved for all honest-to-honest communications unless extreme network latency or Byzantine withholding attacks exceed quantifiable local buffer capacities, forcing optimistic queue evictions. Ultimately, this structural paradigm successfully balances resilient Byzantine fault tolerance with optimal network message sizes, minimal computational costs, and strictly bounded $O(n^2)$ local space complexity.

The core complexity metrics of our process-isolated algorithm are characterized by the following properties:

\begin{itemize}
    \item \textbf{Unified Point-to-Point Unicast:} In contrast to existing asynchronous Byzantine-tolerant causal frameworks that are mathematically restricted to broadcast primitives \cite{DBLP:journals/tcs/AuvolatFRT21, Cachin2001} or protocols that must enforce strict synchronous network latency bounds ($\delta$) and timeout-driven designs to guarantee progress \cite{DBLP:journals/tpds/MisraK24}, our architecture operates entirely over a single, uniform asynchronous point-to-point network. Crucially, we eliminate the need for separate or specialized messaging infrastructures by routing both application data payloads ($\text{MSG}$) and control signaling primitives ($\text{ACK}$, $\text{PERMIT}$) as point-to-point unicast transmissions over this exact same underlying network layer. This unified peer-to-peer approach bypasses synchronous latency assumptions and global consensus channels entirely, ensuring low latency, high throughput, and robust structural process isolation while keeping both data and control communication fundamentally private to the sender and receiver.
    \item \textbf{Network Message Space Overhead ($O(1)$):} The protocol eliminates the traditional requirement of piggybacking linear $O(n)$ or quadratic $O(n^2)$ history metadata onto network transmissions. By restricting the Dependency Vectors ($\mathbf{DV}$) and outbound history snapshots ($\mathbf{LMS}$) entirely to internal, process-local state workspaces, application packets carry only a fixed set of scalar components, preserving a strict $O(1)$ message size overhead with respect to the network size $n$.
    \item \textbf{Local Space Complexity ($O(n^2)$):} To protect against cross-channel Byzantine denial-of-service and buffer-exhaustion attacks, local data management is partitioned into process-segregated unacknowledged buffers ($\mathbf{U}_i[1..n]$) and missing permit buffers ($\mathbf{Q}_i[1..n]$), bounded by capacity thresholds $M_{\max}$ and $Q_{\max}$, respectively. Because each cached message entry encapsulates its own $n$-sized dependency array ($m.\mathbf{DV}$) and outbound tracking snapshot ($m.\mathbf{LMS}$), the total local space complexity scales as $O(n^2)$. This quadratic memory commitment represents the mathematically necessary cost to achieve complete fault isolation and $O(1)$ network message sizes.
    \item \textbf{Computational Time Complexity (($O(n^2)$ Worst-Case Pass / Amortized $O(n)$):} Basic enqueue, dequeue, and lookup operations within the process-isolated queues execute in constant $O(1)$ time relative to network scale. Causal evaluations and permit-clearance verification routines (\textit{sendMessages} and \textit{sendPermits}) perform localized vector comparisons of size $n$, running in worst-case $O(n^2)$ time over the complete array of channels. However, because these passes are event-driven, bounded by capacity limits, and execute only upon the generation or ingestion of state-transition signals, the computational overhead is highly optimized and amortized $O(n)$ across high-throughput application pipelines.
\end{itemize}

\noindent{{\bf Outline:}} 
Section~\ref{sec:background} gives the system model and background. Section~\ref{sec:prevwork} reviews related work. Section~\ref{sec:algo} gives our algorithm. Section~\ref{sec:proof} gives the correctness proof. Section~\ref{sec:complexity} gives the complexity analysis. %of our algorithm.
Section~\ref{sec:conc} gives a discussion and the conclusions.

\section{System Model and Background}
\label{sec:background}

\subsubsection*{System Model}
We assume an asynchronous distributed system comprising of $n$ processes. The system is  subject to Byzantine failures. Any $t$ number of processes, where $t < n$, could behave arbitrarily, by dropping messages, forging messages, or colluding with other Byzantine processes.
The system has reliable FIFO communication channels between each pair of processes for application messages (e.g. TCP). %It also has an Asynchronous Byzantine Atomic Broadcast (ABAB) channel for control messages. Because deterministic atomic broadcast is impossible in a purely asynchronous system under failures, we assume the ABAB primitive is implemented using a randomized Validated Asynchronous Byzantine Agreement (VABA) protocol \cite{abraham}. Specifically, we assume an asymptotically optimal VABA protocol, such as the protocol by Abraham et al. \cite{abraham}, which guarantees global total order with an expected $O(1)$ running time and $O(n^2)$ expected word communication. 

\subsubsection*{Causal Ordering and Safety Definitions}
Causal ordering \cite{birman1987reliable} is predicated on the \textit{happens-before} relation ($\rightarrow$) on the set of events formulated by Lamport \cite{LLclock}. 
%If the send event of message $m_1$ happens before the send event of $m_2$, causal ordering dictates that $m_1$ must be delivered before $m_2$ at any common destination. This property is known as \textit{strong safety}. 
Under this framework, the causal history of any given message $m$ in a set of application data transmissions $R$ is defined by its \textit{causal past}, denoted as $CP(m) = \{m' \in R \mid m' \rightarrow m\}$, which encompasses all messages that causally precede it and where $m' \rightarrow m$ is defined by $send(m') \rightarrow send(m)$ \cite{DBLP:journals/tpds/MisraK24}. 
A causal delivery protocol satisfies \textit{Strong Safety} if, for any two messages where $m' \in CP(m)$ destined for a common correct process, the receiver never delivers $m$ before $m'$. 

\subsubsection*{The Impossibility of Cryptography-Free Deterministic Byzantine Causal Ordering}
In a purely asynchronous system, there is no fixed upper bound on message latency or processor speed. In such an environment, the Fischer-Lynch-Paterson (FLP) impossibility result states that deterministic consensus is unsolvable under failures \cite{DBLP:journals/jacm/FischerLP85}. Building upon this, Misra and Kshemkalyani proved that cryptography-free deterministic Byzantine-tolerant causal ordering ensuring both strong safety and liveness is impossible for unicasts, multicasts, and broadcasts \cite{DBLP:journals/tpds/MisraK24,DBLP:conf/icdcn/MisraK23}. Thus, any such asynchronous Byzantine-tolerant causal algorithm must either accept localized strong safety violations or risk permanently halting honest processes (sacrificing liveness).

\subsubsection*{Adapting Causal Ordering to Byzantine Systems}
In a Byzantine environment, a Byzantine process can manipulate its local execution history. It may receive $m_1$, generate $m_2$ based on that information, but send $m_2$ while denying it ever received $m_1$. Internal process state is opaque to the network, and this specific omission cannot be prevented (using a deterministic solution) \cite{DBLP:journals/tpds/MisraK24,DBLP:conf/icdcn/MisraK23,DBLP:conf/netys/MisraK22}. 
%Consequently, literature defines a secondary property known as \textit{weak safety}, which requires that $m_1$ is delivered before $m_2$ only if the causal path from $m_1$ to $m_2$ passes exclusively through correct processes \cite{DBLP:journals/tpds/MisraK24}.
To establish an ordering boundary that remains mathematically achievable despite malicious historical omissions, the system restricts focus to the set $S \subseteq R$ of messages sent and delivered exclusively by correct processes, defining the \textit{Byzantine happens-before} relation ($\stackrel{B}{\rightarrow}$). 
\begin{definition}
The Byzantine happens before relation $\stackrel{B}{\rightarrow}$ on
messages in $S$ consists of the following rules \cite{DBLP:journals/tpds/MisraK24}:
\begin{enumerate}
\item If $p_i$ is a correct process and $p_i$ sent or delivered message $m$ to/from another correct process before sending message $m'$ to a correct process, then $m \stackrel{B}{\rightarrow} m'$.
\item If $m \stackrel{B}{\rightarrow} m'$ and $m' \stackrel{B}{\rightarrow} m''$, then $m \stackrel{B}{\rightarrow} m''$.
\end{enumerate}
\end{definition}
This relation filters out causal pathways passing through faulty nodes, constructing the \textit{Byzantine causal past}, denoted as $BCP(m) = \{m' \in S \mid m' \stackrel{B}{\rightarrow} m\}$\cite{DBLP:journals/tpds/MisraK24}. Consequently, 
a causal delivery protocol satisfies \textit{Weak Safety} if, for any two messages where $m' \in BCP(m)$ destined for a common correct process, the receiver never delivers $m$ before $m'$ 
%a protocol satisfies \textit{Weak Safety} if it guarantees that no correct process delivers a message $m$ before any predecessor $m' \in BCP(m)$ 
\cite{DBLP:journals/tpds/MisraK24}.

\begin{comment} 
To defend against active tampering rather than just omission, algorithms must also enforce \textit{cryptographic integrity}. In this architecture, cryptographic integrity guarantees that neither the application payload nor the attached causal metadata (specifically, the $O(1)$ hash tip and ledger height) can be forged, altered, or illegitimately replayed by a Byzantine process. It ensures that if a correct process verifies and accepts a message, the message provably originated from the claimed sender, and its attached causal proof represents a mathematically immutable snapshot of the sender's state at the exact moment of transmission.
\end{comment}

To enforce these causal restrictions without piggybacking variable-sized metadata arrays or linear dependency histories onto transmitted payloads, our protocol incorporates the \textit{Sender Permission to Send (SPS)} invariant introduced by Almeida \cite{almeida}. Rather than tracking an explicit causal past on messages, Almeida defines the SPS invariant temporally: a process $i$ is restricted from network-sending an application message $m$ to a process $j$ when there are still undelivered messages that were sent to any process other than $i$ by any peer process $k \neq j$ that previously transmitted messages to $i$ \cite{almeida}. This strategy ensures that for any message delivered by $i$ prior to creating $m$, the sender of that message has already received acknowledgments for its own historical dependencies \cite{almeida}. Assuming an underlying communication fabric that guarantees point-to-point FIFO delivery, the universal enforcement of this invariant across all correct nodes is mathematically sufficient to guarantee system-wide causal delivery ($\text{SPS} + \text{FIFO} \implies \text{Causal}$). Our protocol adapts this structural invariant to a Byzantine setting, leveraging process-isolated outbound queues ($\mathbf{U}_i$) and missing permit buffers ($\mathbf{Q}_i$) to clear authorization bounds while ensuring fault containment and network liveness.

Crucially, the algorithm separates the local scheduling of a message from its transmission into the network by distinguishing between an application-initiated \textit{causal-send} event and an algorithm-controlled \textit{network-send} event. When an application executes \textit{causalSend}, the routine immediately snapshots local tracking states and returns control to the execution thread to maximize concurrency, but intercepts the message record and buffers it locally. The physical network-send over the point-to-point network is then delayed until the underlying signaling plane verifies that the release of $m$ is safe to execute without violating causal dependencies.

\subsubsection*{Liveness}
Alongside the safety properties, causal ordering algorithms must also ensure the property of \textit{liveness}. In the presence of Byzantine faults, a weakening of standard liveness is required: liveness guarantees that every message sent by a correct process to another correct process will eventually be delivered.

\subsubsection*{The Congestion-Relaxed Causal Delivery Abstraction}
Because strictly enforcing causal safety is impossible in asynchronous Byzantine networks without risking deadlock, we define a relaxed safety abstraction called \textit{Congestion-Relaxed Causal Delivery}. %Let $K = M_{\max} + Q_{\max}$ represent the capacity bound of local process queues. Under this abstraction:
Let $M_{\max}, Q_{\max}$ represent the capacity bound of local $\mathbf{U}$, $\mathbf{Q}$, respectively, process queues. Further, $m.per$ is a flag that indicates whether a PERMIT is required for message $m$, and $m.mid$ is $m$'s sender-assigned sequence number. Under this abstraction:

\begin{definition}[Congestion-Relaxed Causal Delivery (CRCD)]
\label{def:crcd}
A system satisfies Congestion-Relaxed Causal Delivery if:
\begin{enumerate}
    \item \textbf{FIFO Order}: Messages sent directly from process $i$ to process $j$ are delivered in the order sent.
    \item \textbf{Bounded Causal Integrity}: 
    For any two messages $m_{i_0}, m_{i_x}$ sent by correct processes $i_1$ and $i_x$, respectively, to a common correct process $i_0 = i_{x+1}$, where $m_{i_0} \stackrel{B}{\rightarrow} m_{i_x}$ along causal path $m_{i_1}, m_{i_2},\ldots m_{i_x}$ ($x \ge 2$) sent by correct processes $i_1, i_2,\- \ldots i_x$, respectively, 
    if:
    \begin{enumerate}
        \item process $i_1$, the sender of $m_{i_0}$ to $i_0$, has {\em causalSend} less than $M_{\max}$ messages to $i_0$ since enqueuing $m_{i_0}$ in $\mathbf{U}_{i_1}[i_0]$ until $m_{i_0}$ gets dequeued, and %since inserting $m_{i_0}$ in $U_{i_1}[i_0]$ until dequeuing it from $Q_{i_1}[i_0]$, and
        \item $\neg m_{i_1}.per$ or process $i_1$, the sender of $m_{i_1}$ to $i_2$, has {\em causalSend} less than $Q_{\max}$ messages to $i_2$ since enqueuing $m_{i_1}$ in $\mathbf{Q}_{i_1}[i_2]$ until $m_{i_1}$ gets dequeued, and
        \item $\neg m_{i_1}.per$ or process $i_2$, the sender of $m_{i_2}$ to $i_3$, has received $PERMIT(m_{i_1})$ from $i_1$ before $i_2$ network-sends $m_{i_2}$
    \end{enumerate}
    then $i_0 (= i_{x+1})$ never delivers $m_{i_x}$ before $m_{i_0}$.
\end{enumerate}
\end{definition}

This abstraction formally characterizes the observable safety guarantee: weak causal safety is violated only when a process experiences extreme network congestion or adversarial stalling that forces local queue eviction beyond the bounds stated in the characterization. If the  bounds are never exceeded, the abstraction reduces to standard Weak Safety.

Definition~\ref{def:crcd} is for causal ordering between a specific $m_{i_0}$ and $m_{i_x}$. Observe that this definition is equivalent to the following generalization.
\begin{definition}[Generalized Congestion-Relaxed Causal Delivery]
\label{def:crcd-gen}
A system satisfies Congestion-Relaxed Causal Delivery if:
\begin{enumerate}
    \item \textbf{FIFO Order}: Messages sent directly from process $i$ to process $j$ are delivered in the order sent.
    \item \textbf{Bounded Causal Integrity}: 
    For all correct processes $dst$,
    %For any two messages $m_{i_0}, m_{i_x}$ sent by correct processes $i_1$ and $i_x$, respectively, to a common correct process $i_0 = i_{x+1}$, where $m_{i_0} \stackrel{B}{\rightarrow} m_{i_x}$ along causal path $m_{i_1}, m_{i_2},\ldots m_{i_x}$ ($x \ge 2$) sent by correct processes $i_1, i_2,\- \ldots i_x$, respectively, 
    if at each correct process $i$:
    \begin{enumerate}
        \item there is no eviction from $\mathbf{U}_i[dst]$, and
        %\item process $i_1$, the sender of $m_{i_0}$ to $i_0$, has {\em causalSend} less than $M_{\max}$ messages to $i_0$ since enqueuing $m_{i_0}$ in $\mathbf{U}_{i_1}[i_0]$ until $m_{i_0}$ gets dequeued, and %since inserting $m_{i_0}$ in $U_{i_1}[i_0]$ until dequeuing it from $Q_{i_1}[i_0]$, and
        \item there is no eviction from $\mathbf{Q}_i[dst]$, and
        %\item $\neg m_{i_1}.per$ or process $i_1$, the sender of $m_{i_1}$ to $i_2$, has {\em causalSend} less than $Q_{\max}$ messages to $i_2$ since enqueuing $m_{i_1}$ in $\mathbf{Q}_{i_1}[i_2]$ until $m_{i_1}$ gets dequeued, and
        \item before network-sending $m$ to $dst$, an explicit $PERMIT$ is received from  correct $j$ for $m'$ (sent by $j$), where $m'.mid = m.\mathbf{DV}[j]$ and $m'.per=1$, 
        %\item $\neg m_{i_1}.per$ or process $i_2$, the sender of $m_{i_2}$ to $i_3$, has received $PERMIT(m_{i_1})$ from $i_1$ before $i_2$ network-sends $m_{i_2}$
    \end{enumerate}
    then weak causal safety is satisfied.
    %then $i_0 (= i_{x+1})$ never delivers $m_{i_x}$ before $m_{i_0}$.
\end{enumerate}
\end{definition}
This definition stipulates that CRCD is satisfied if (at correct processes) there is no eviction from the queues for correct processes and all due permissions (PERMITs) from correct processes arrive before network-sending messages to correct processes.

%\subsubsection*{Asynchronous Byzantine Atomic Broadcast (ABAB)}
%To maintain a globally consistent view of causal dependencies without attaching $O(n)$ vectors to every message, our algorithm relies on an Asynchronous Byzantine Atomic Broadcast (ABAB) channel for control messages. While deterministic atomic broadcast is impossible under FLP, modern ABAB protocols bypass this limitation using randomized or threshold-cryptographic consensus mechanisms. These protocols guarantee total order, ensuring that all correct processes eventually deliver the exact same sequence of control events, which allows our algorithm to compute a deterministic, constant-sized running Hash Ledger.

\section{Previous Work}
\label{sec:prevwork}
\subsubsection*{Crash-Tolerant Causal Delivery}
The Raynal-Schiper-Toueg (RST) algorithm was among the first to solve causal ordering but required a space and message overhead of $O(n^2)$, as every message carried the sender's local matrix of known message states \cite{raynal1991causal}. Subsequent variations, such as the KS algorithm \cite{KS}, improved practical performance but maintained the $O(n^2)$ worst-case bound. More recently, the Cykas algorithm \cite{cykas} adopted a sender-buffering approach that reduced the message metadata overhead to $O(1)$, though it suffers from throughput scalability and liveness limitations. 

Almeida subsequently introduced a hybrid-buffering approach utilizing the Sender Permission to Send (SPS) invariant, which successfully provides liveness and throughput scalability and achieves space and computational optimality \cite{almeida}. Under the SPS invariant, causality is enforced at the sender side rather than the receiver; a process is mathematically restricted from transmitting a new message over the network until it receives confirmation (a \textit{PERMIT}) that all causally preceding messages have been successfully delivered to their respective destinations. Almeida's algorithm implements this by combining sender-buffering to delay network transmissions until the SPS conditions are met, and receiver-buffering to enforce strict FIFO ordering and allow message pipelining. 

To achieve an amortized $O(1)$ computational complexity, the algorithm completely avoids scanning message histories. Instead, it utilizes specialized data structures—namely sliding arrays and indexable sliding maps—to perform constant-time tracking of pending \textit{PERMIT} and missing acknowledgments (\textit{ACK}s). These \textit{ACK}s serve as the critical feedback loop for the SPS invariant: when a receiver successfully delivers a message, it replies with an \textit{ACK}. This allows the original sender to clear the message from its unacknowledged sliding array and subsequently issue the required \textit{PERMIT}s to unblock causally dependent messages across the network. 

However, Almeida's unified buffering structures and implicit trust in the faithful generation of \textit{ACK}s and \textit{PERMIT}s are designed for crash-fault models, making them highly vulnerable to resource-exhaustion and state-pinning attacks in a Byzantine environment.

\begin{table*}[t]
\begin{center}
{\small
\begin{minipage}{\textwidth}
\begin{center}
\begin{tabular}{|l|l|l|l|l|l|l|l|l|}
%\toprule %
\hline
 & Cachin et  & Auvolat  & Sender-In-  & Channel & Minicast \cite{minicast} & Patel & Patel & {\bf Proposed}  \\
 & al. \cite{Cachin2001} & et al. \cite{DBLP:journals/tcs/AuvolatFRT21} & hibition \cite{DBLP:journals/tpds/MisraK24} & Sync \cite{DBLP:journals/tpds/MisraK24} &  &  et al. \cite{PKbcrb} & et al. \cite{PKbcrb2} & \\
%\midrule %
\hline \hline

Comm. mode & Broadcasts & Broadcasts & Unicasts & Unicasts & Unicasts & Broadcasts & Broadcasts & Unicasts \\
\hline

Asynchronous & \cmark & \cmark & \xmark & \xmark & \cmark & \cmark & \cmark & \cmark   \\
\hline

Deterministic & \xmark & \cmark & \cmark & \cmark & \cmark & \cmark & \cmark & \cmark    \\
\hline

Throughput- & \xmark & \cmark & \xmark & \cmark & \xmark & \cmark & \cmark & \cmark \\
scalable & & & & & & & & \\
\hline

Crypto-free & \xmark & \cmark & \cmark & \cmark & \xmark & \xmark & \xmark & \cmark \\
\hline

Liveness & \cmark & \cmark & \cmark & \cmark & \cmark & \cmark & \cmark & \cmark \\
\hline

%Safety & \cmark & \xmark & \xmark & \xmark & \xmark & \xmark \\
Safety & $Pr(\text{violation})$ & Weak safety & Weak safety & Weak safety & Weak safety & $Pr(\text{violation})$ & Strong & C.-R. Causal \\
guarantee & $ = \epsilon > 0$ &  & & & & $ = O(n^{-1} \ln^2 n)$ & safety & Delivery \\
\hline

Msg. space & $O(n)$ & $O(n)$ & $O(1)$ & $O(1)$ & $O(n)$ & $O(1)$ & $O(1)$ & $O(1)$ \\
overhead & & & & & & & & \\
\hline

Number of msgs & $O(n^2)$ of $O(n +$-  & $O(n^2)$ of & 1 of $O(1)$ $+$ & $2n$ of $O(1)$ $+$ & $O(n^2)$ of $O(1)$ $+$ & $O(n^2)$ of & $O(n^2)$ of & 2 of $O(1)$ $+$\\
 & $\text{encrypted} \; |m|)$ & $O(n+|m|)$ & 1 of $O(|m|)$ & 1 of $O(|m|)$ & 1 of $O(n+|m|)$ & $O(|m|)$ & $O(|m|)$ & 1 of $O(|m|)$ \\
\hline

Communication  & $O(n^2(n+|m|))$ & $O(n^2\cdot$ & $O(|m|)$ & $O(n + |m|)$ & $O(n^2 + |m|)$ & $O(n^2|m|)$ & $O(n^2|m|)$ & $O(|m|)$ \\
word complexity & & $(n+|m|))$ & & & & & & \\
\hline

Local space & unbounded & unbounded & $O(1)$ & $O(n)$ & unbounded & unbounded & unbounded & $O(n^2)$ \\
% & at all processes & at all processes & & at all processes & at all processes & \\
\hline

Time & $O(n)$ & $O(n)$ & $O(1)$ & $O(n)$ & $O(n)$ & $O(n)$ & $O(n)$ & $O(n)$ \\
complexity & at all procs & at all procs & & at all procs & at all procs & & & amortized \\
\hline

%\bottomrule 
\end{tabular}
\end{center}
\end{minipage}
}
\end{center}
%\end{minipage}
%\medskip
\caption{Comparison of Byzantine-tolerant causal ordering algorithms for unicast. Message is denoted $m$. 
}
\label{table:comparison}
\end{table*}

\subsubsection*{Byzantine-Tolerant Causal Ordering}
%\subsubsection*{Validated Asynchronous Byzantine Agreement}
%Because our control plane relies on ABAB to sequence control messages, the efficiency of our algorithm is bound by the efficiency of the underlying broadcast primitive. 
Cachin et al. initially formalized Validated Asynchronous Byzantine Agreement (VABA), achieving optimal resilience and $O(1)$ time, but requiring $O(n^3)$ expected word communication \cite{VABACachin} and using randomization. 
In a separate work focusing on secure service replication, Cachin et al. proposed a Secure Causal Atomic Broadcast protocol built directly atop this atomic broadcast primitive \cite{Cachin2001}. To guarantee strict ``secure causality'', preventing an adversary from front-running or manipulating causal dependencies based on message contents, their protocol employs threshold encryption. A sender encrypts the full application payload, and the ciphertext is routed through the atomic broadcast network. Only after the encrypted message is globally sequenced do the participating nodes exchange decryption shares to reveal the plaintext payload. This robustly prevents information laundering and input-causality attacks but tightly couples the data and control planes. Consequently, the heavy expected $O(n^3)$ communication complexity of the consensus layer must process the full application payload size $O(|m|)$, while the network simultaneously absorbs the overhead of distributing threshold decryption shares. As the underlying atomic broadcast uses randomized consensus, there is $\epsilon > 0$ probability of causal ordering violation.

Two protocols for crash failures and a third for Byzantine failure of clients using cryptography were proposed for secure causal atomic broadcast \cite{DBLP:conf/dsn/DuanRZ17}; the third made assumptions on latency of messages, and hence works only in a synchronous system.

The Byzantine causal broadcast algorithm of Auvolat et al. \cite{DBLP:journals/tcs/AuvolatFRT21} uses the BRB primitive \cite{DBLP:journals/iandc/Bracha87} ($O(n^2)$ messages) over the message payloads having $O(n)$ message space overhead, and cannot guarantee strong safety. It also uses unbounded memory. % despite using $O(n)$ message space overhead. 

Due to the impossibility of solving Byzantine causal ordering deterministically without using cryptography in asynchronous systems \cite{DBLP:journals/tpds/MisraK24}, recent practical solutions have relied on strengthening the system model with synchronous assumptions (bounded latency). %Misra and Kshemkalyani proposed 
The \textit{Sender-Inhibition} and \textit{Channel Sync} algorithms, which guarantee weak safety and liveness for unicasts by utilizing timeouts, use $O(1)$-sized $O(1)$ and $O(n)$ control messages, respectively \cite{DBLP:journals/tpds/MisraK24}. 

For an asynchronous system, Rumreich and Sivilotti introduced the \textit{Minicast Algorithm}, which reduces the worst-case message size complexity to $O(n)$ by broadcasting small control messages alongside application unicasts \cite{minicast}. By using a Byzantine Reliable Broadcast (BRB) primitive \cite{DBLP:journals/iandc/Bracha87} with FIFO ordering ($O(n^2)$ messages), their solution achieves Byzantine tolerance but inherits an $O(n)$ message space overhead. Moreover, it cannot guarantee strong safety. Crucially, however, their design assumes unbounded local space. As the algorithm lacks strict capacity limits for its receiving queues, a single Byzantine process can execute a memory exhaustion attack by flooding an honest process with invalid application messages, causing its local memory to grow unboundedly.

Patel et al. \cite{PKbcrb} proposed four $\mathcal{O}(1)$ metadata overhead algorithm variants for Byzantine Causal Reliable Broadcast, all having $\mathcal{O}(1)$ message metadata. By modeling link propagation delays as independent exponential variables, they derived bounds on the probability of weak safety violations and of strong safety violations. Patel et al. \cite{PKbcrb2} also  proposed a $\mathcal{O}(1)$ metadata overhead algorithm for Byzantine Causal Reliable Broadcast guaranteeing strong safety.

\begin{comment}
    
In contrast, our Isolated-Buffer Optimistic architecture routes application payloads directly over lightweight, point-to-point FIFO channels and intentionally accept mathematically unavoidable information laundering attacks at the local application layer. This architectural tradeoff keeps the actual data payloads completely off the expensive consensus channel. 
\end{comment}
%In contrast, our Isolated-Buffer Optimistic architecture explicitly decouples these planes. We route application payloads directly over lightweight, point-to-point FIFO channels and intentionally accept mathematically unavoidable information laundering attacks at the local application layer. This architectural tradeoff %allows us to keep keeps the actual data payloads completely off the expensive consensus channel. 
%
%Abraham et al. subsequently resolved the VABA consensus bottleneck, providing a protocol with asymptotically optimal communication—specifically, an expected $O(n^2)$ messages of $O(1)$ size each and $O(1)$ expected running time \cite{abraham}. By leveraging this optimal VABA implementation strictly for our constant-sized control metadata (\textit{PERMIT}s and \textit{ACK}s), our ABAB control plane ensures that the global causal ledger can be securely maintained without the application payloads inflating the consensus layer's network complexity.
%
A comparison of Byzantine-tolerant causal delivery algorithms for unicasts is given in Table~\ref{table:comparison}. Note, the broadcast algorithms, if used for unicast, require an additional level of encryption with the recipient's public key to maintain privacy.

\section{Proposed Algorithm}
\label{sec:algo}
\begin{comment}
\subsection{System Model and Assumptions}
We assume an asynchronous distributed system comprising of $n$ processes. The system has reliable FIFO communication channel for application messages (e.g. TCP). It also has an Asynchronous Byzantine Atomic Broadcast (ABAB) channel for control messages. Because deterministic atomic broadcast is impossible in a purely asynchronous system under failures, we assume the ABAB primitive is implemented using a randomized Validated Asynchronous Byzantine Agreement (VABA) protocol \cite{abraham}. Specifically, we assume an asymptotically optimal VABA primitive, such as the protocol by Abraham et al., which guarantees global total order with an expected $O(1)$ running time and $O(n^2)$ expected word communication. 

The system is also subject to Byzantine failures. Any $t$ number of processes could behave arbitrarily, by dropping messages, forging messages, or colluding with other Byzantine processes. The Byzantine processes are limited to $t < n/3$. 
\end{comment}
As our architectural tradeoff chooses not to use cryptography and randomization in the data plane to keep encrypted data payloads off the expensive consensus channel (consensus was run on encrypted data in \cite{Cachin2001}), and uses a lightweight deterministic solution (that requires only 
the $O(1)$-sized message space overhead and $O(1)$ control messages at the cost of $O(n^2)$ local space), the result of \cite{DBLP:journals/tpds/MisraK24} is applicable.
As it is not possible to guarantee strong safety and liveness deterministically and without using cryptography in a Byzantine system \cite{DBLP:journals/tpds/MisraK24}, our algorithm prioritizes liveness. It allows some occasional violations of safety to ensure liveness and simultaneous O(1) message overhead and $O(n^2)$ local space overhead. In scenarios of extreme latency or malicious resource-exhaustion attacks, the system gracefully degrades to allowing occasional violations of safety to prevent causal deadlocks; however {\em Congestion-Relaxed Causal Delivery} is always guaranteed. 

\subsection{Core Logic}
We build on the work done by Almeida and we introduce Byzantine tolerance to his Sender Permission to Send invariant \cite{almeida}. %Almeida's classic protocol achieves space and computational optimality, topology agnosticism, and scalable throughput by leveraging a hybrid architecture of sender and receiver side buffering. He introduced the Sender Permission to Send (SPS) invariant and utilized it in his causal delivery algorithm. 
%
%Our approach enhances the network model to tolerate up to 
However, deploying such a model within a purely asynchronous Byzantine environment introduces critical structural vulnerabilities that must be systematically mitigated to preserve both liveness and causal integrity:

\begin{itemize}
    \item \textbf{Head-of-Line Blocking and State Pinning:} Byzantine processes can maliciously withhold required control messages (such as \textit{PERMIT}s or \textit{ACK}s) to permanently halt an honest process's outgoing communication pipeline.
    \item \textbf{Active Causal Forgery:} Byzantine processes may falsify their causal history to release messages that illegally bypass the Sender Permission to Send (SPS) invariant rules.
    \item \textbf{Memory Exhaustion:} Byzantine processes can flood the network with a stream of invalid or artificially future-dated application messages to consume unbounded local memory at honest processes.
    \item \textbf{Message Replay Attacks:} Byzantine processes may capture and repeatedly replay historically valid messages to trigger duplicate deliveries or corrupt the local causal state.
\end{itemize}

%While the previous asynchronous Byzantine-tolerant approach --- the Minicast algorithm, attempted to mitigate such Byzantine behaviors, it still incurs an $O(n)$ message size complexity by appending vectors of delivered control messages to every application payload. 
To maintain a constant $O(1)$ message size over the network, we utilize isolated Buffers ($\mathbf{U_i}$) and Dependency Vectors ($\mathbf{DV}$). Although having those local data structures results in an $O(n^2)$ local space complexity, it ensures that the messages sent over the network have a constant size. Additionally, to prevent the expensive cryptographic consensus or global broadcasts from degrading application throughput, all communication relies on a point-to-point unicast for all messages. This includes all application messages, as well as the control messages \textit{ACK} and \textit{PERMIT}. 

\subsection{Important Data Structures}
To guarantee a $O(1)$ network message size overhead and an amortized $O(n)$ computational time complexity under a Byzantine-tolerant paradigm, the proposed framework transitions away from the unified data structures utilized in Almeida's protocol \cite{almeida}. Instead, we introduce a process-isolated queue architecture designed to structurally confine the impact of adversarial behaviors, paired with a localized tracking array paradigm. The structural definitions are detailed in Algorithm~\ref{alg:byzantine_sps_fifo_gc_types}.
Their description and usage is given in Appendix~\ref{ap:ds}.

\subsection{Byzantine Tolerant Algorithm}
% Our proposed algorithm builds upon the Sender Permission to Send (SPS) invariant, but it has completely redesigned memory management into an Isolated-Buffer with optimistic evacuation. It is an event-driven algorithm that reacts to incoming point-to-point application messages and globally ordered control messages. The objective of this algorithm is to introduce Byzantine tolerance by utilizing local Dependency Vectors (DV) to verify the SPS invariant while maintaining a constant O(1) message overhead. By utilizing $O(n^2)$ local space for isolated queues $(U_i)$, the algorithm guarantees liveness in an asynchronous system by allowing occasional violations of causal safety through the eviction of stale messages.

Our proposed algorithm (Algorithm~\ref{alg:byzantine_sps_fifo_gc}) builds upon the Sender Permission to Send (SPS) invariant, introducing an event-driven framework that operates entirely over a point-to-point network utilizing unicast primitives. The architecture relies on process-isolated buffers with automated, optimistic cascading network evictions to enforce Byzantine fault tolerance while maintaining a strict $O(1)$ message-size overhead. By allocating $O(n^2)$ local memory for isolated queues ($\mathbf{U}_i, \mathbf{Q}_i$), the protocol guarantees system liveness in an asynchronous network environment, strategically permitting localized, transient violations of causal safety when physical buffer constraints are reached. %Crucially, both application data and control signals (\textit{ACK}s and \textit{PERMIT}s) are handled uniformly as point-to-point unicast transmissions over the same underlying network layer.
The procedures are described in Appendix~\ref{ap:procs}.
% \input{Types_nocrypt}
% \input{Algo_nocrypt}

%\input{Types_temp}
%\input{Algo_temp}
%\input{Types_2Q_Revised}
%\input{Algo_2Q_Revised}
% \documentclass{article}

% % Sets standard conference margins
% \usepackage[margin=.5in]{geometry} 

% \usepackage{amsmath}
% \usepackage{amssymb}
% \usepackage[linesnumbered,ruled,vlined]{algorithm2e} 
% \usepackage{multicol} % Required for the side-by-side layout

% \begin{document}

\begin{algorithm}[th]
\small
\DontPrintSemicolon %
\SetKwComment{Comment}{// \textit{}}{}

\caption{Isolated-Buffer Optimistic Causal Delivery Types. Data structures at process $i$.}
\label{alg:byzantine_sps_fifo_gc_types}

% This shrinks the spacing between lines slightly to ensure it stays on one page
\linespread{0.95}\selectfont 

\SetInd{0.2em}{0.5em} % Tightens the indentation to save horizontal space

\textbf{types \& state} \\
 Msg: record \{ \\
    mid: $\mathbb{N}$, \Comment*{Local message sequence number}
    per: $\mathbb{B}$, \Comment*{Flag: is a permit required?}
    pl: $\mathbb{P}$, \Comment*{Application payload}
    sent: $\mathbb{B}$, \Comment*{Flag: network-transmitted?}
    DV : $\text{Array}[n] \text{ of } \mathbb{N}$, \Comment*{Last Message Recd. vector; stored in $\mathbf{U}_i$ entries per message record but not transmitted}
    LMS : $\text{Array}[n] \text{ of } \mathbb{N}$, \Comment*{Last Message Sent vector; stored in $\mathbf{U}_i$ entries per message record but not transmitted}
    
%    hash: String, \Comment*{Cryptographic hash chain tip}
%    height: $\mathbb{N}$ \Comment*{Cryptographic ledger height}
 \} \\

    $\mathbf{U}_i \gets \text{Array}[n] \text{ of } [\,]$; \Comment*{Isolated unacked messages}
    $\mathbf{Q}_i \gets \text{Array}[n] \text{ of } [\,]$; \Comment*{Isolated missing permits}
    %$\mathbf{ET}_i \gets \text{Array}[n] \text{ of } -1$; \Comment*{Eviction tracker}
    $\mathbf{LD}_i \gets \text{Array}[n] \text{ of } -1$; \Comment*{Last delivered message IDs}
    $\mathbf{LMS}_i \gets \text{Array}[n] \text{ of } -1$; \Comment*{Last message sent vector}   
    $\mathbf{LPR}_i \gets \text{Array}[n] \text{ of } -1$; \Comment*{Last Permit Recd. vector}
    $ck_i \gets 0$; \Comment*{Local clock}
    $M_{max} \gets 1024$; \Comment*{Max cap. per unacked queue, adjustable}
    $Q_{max} \gets 1024$; \Comment*{Max cap. per PERMIT queue, adjustable}
% $\mathbf{P}_i \gets \text{Array}[n] \text{ of } [\,]$; \Comment*{Isolated missing permits}
% $\mathbf{F}_i \gets \text{Array}[n] \text{ of } [\,]$; \Comment*{Cryptographic holding buffer}
% $\mathbf{LMR}_i \gets \text{Array}[n] \text{ of } 0$; \Comment*{Last Msg. Recd. vector}
% $P_{max} \gets 256$; \Comment*{Max capacity per permit queue}
% $F_{max} \gets 256$; \Comment*{Max capacity per holding queue}
% $W \gets 10000$; \Comment*{Max Window Size for GC}
% $h_i \gets 0$; \Comment*{Current Ledger Height}
% $\mathbf{H}_i \gets \text{"0"}$; \Comment*{Hash Chain tip}
% $\mathbf{L}_i \gets \text{Array}[W]$; \Comment*{Ring Buffer for Stable Hashes}

 \end{algorithm}
% \documentclass{article}

% % Sets standard conference margins
% \usepackage[margin=.5in]{geometry} 

% \usepackage{amsmath}
% \usepackage{amssymb}
% \usepackage[linesnumbered,ruled,vlined]{algorithm2e} 
% \usepackage{multicol} % Required for the side-by-side layout

% \begin{document}

\begin{algorithm*}[th]
%\small
\SetKwComment{Comment}{// \textit{}}{}
\SetKwFunction{CausalSend}{causalSend}
\SetKwFunction{BCCHBroadcast}{BCCH\_Broadcast}
\SetKwFunction{ABABBroadcast}{ABAB\_Broadcast}
\SetKwFunction{sendMessages}{SendMessages}
\SetKwFunction{sendPermits}{SendPermits}
\SetKwFunction{CheckFuture}{checkFutureBuffer}
\SetKwFunction{Receive}{receive\_i}
\SetKwFunction{BDeliv}{BCCH\_Deliver\_i}
\SetKwFunction{Deliver}{deliver}
\SetKwFunction{Periodically}{Periodically}
\SetKwFunction{ADeliv}{ABAB\_Deliver\_i}
\SetKwProg{Fn}{procedure}{:}{}
\SetKwProg{On}{on}{:}{}

\caption{Isolated-Buffer Optimistic Causal Delivery (Process $i$)}
\label{alg:byzantine_sps_fifo_gc}

% This shrinks the spacing between lines slightly to ensure it stays on one page
\linespread{0.95}\selectfont 

\begin{multicols}{2}
\SetInd{0.2em}{0.5em} % Tightens the indentation to save horizontal space

%\textbf{types \& state} \\
 % Msg: record \{rcv: $\mathbb{I}$, mid: $\mathbb{N}$, per: $\mathbb{B}$, pl: $\mathbb{P}$, sent: $\mathbb{B}$, hash: String, height: $\mathbb{N}$\}

 % $\mathbf{U}_i \gets \text{Array}[n] \text{ of } [\,]$; \Comment{Isolated unacked messages}
 % $\mathbf{P}_i \gets \text{Array}[n] \text{ of } [\,]$; \Comment{Isolated missing permits}
 % $\mathbf{F}_i \gets \text{Array}[n] \text{ of } [\,]$; \Comment{Cryptographic holding buffer}
 % $\mathbf{evicted\_mid}_i \gets \text{Array}[n] \text{ of } 0$; \Comment{Eviction tracker}
 % $ck_i \gets 0$\;
 % $M_{max} \gets 1024$; \Comment{Max capacity per unacked queue}
 % $P_{max} \gets 256$; \Comment{Max capacity per permit queue}
 % $F_{max} \gets 256$; \Comment{Max capacity per holding queue}
 % $W \gets 10000$; \Comment{Max Window Size for GC}
 % $h_i \gets 0$; \Comment{Current Ledger Height}
 % $\mathbf{H}_i \gets \text{"0"}$; \Comment{Hash Chain tip}
 % $\mathbf{L}_i \gets \text{Array}[W]$; \Comment{Ring Buffer for Stable Hashes}
 
\BlankLine
\Fn{\CausalSend{$dst, p$}}{
    \If{$\mathbf{U}_i[dst].\text{size} \ge M_{max}$}{
        % $stale\_m \gets \mathbf{U}_i[dst].\text{peek}()$\;
        
%        \If{$\neg stale\_m.\text{sent}$}{
%            \Comment{Causal Stagnation. Unpin by evicting blocking permits.}
%            \For{$j \gets 1$ \KwTo $n, j\neq i$}{
%                \If{$stale\_m.\mathbf{DV}[j] \geq \mathbf{P}_i[j][z].mid \mbox{ or either/both are } \perp$}{
%                $\textbf{P}_i[j].remove(z)$\;
%                }
%            }
        
        \Comment{Network Eviction: Free space for incoming message}
        $stale\_m \gets \mathbf{U}_i[dst].\text{remove}()$\;
        %$\mathbf{ET}_i[dst] \gets stale\_m.\text{mid}$\;
        \If{$\neg stale\_m.\text{sent}$}{
            $stale\_m.sent = \text{True}$\;
            $\text{send}_i(dst, \text{MSG}, stale\_m)$\;
            $\forall x, \mathbf{LPR}_i[x] \gets \max(\mathbf{LPR}_i[x], stale\_m.\mathbf{DV}[x])$\;
        }
        \If{%$stale\_m.\text{sent} \land$  
        $stale\_m.\text{per}$}{
            \If{$\mathbf{Q}_i[dst].size \geq Q_{max}$}{
                    $m' \gets \mathbf{Q}_i[dst].\text{remove}()$\;
                    $\text{send}_i(dst, \text{PERMIT}, m'.mid)$\;
                }
            $\mathbf{Q}_i[dst].\text{add}(stale\_m)$\;
        }
    \sendPermits{} \Comment{Unblocks pending PERMITs}
    }
    $req\_per \gets (\exists x \neq dst: \mathbf{U}_i[x].\text{size} > 0)$\;
    $m \gets \text{Msg}\{\text{mid}: ck_i, \text{per}: req\_per, \text{pl}: p, \text{sent}: \text{False}\}$\;
    $ck_i \gets ck_i + 1$\;
    $m.\mathbf{DV} \gets \mathbf{LD}_i$; $m.\mathbf{LMS} \gets \mathbf{LMS}_i$; $\mathbf{U}_i[dst].\text{add}(m)$\;
    $\mathbf{LMS}_i[dst] \gets m.mid$\;
    \sendMessages{} \Comment{Unblocks pending messages}
}

%\begin{comment}
\BlankLine
\Fn{\sendMessages{}}{
    \For{$dst \gets 1$ \KwTo $n$}{
        $k \gets 0$\;
        \While{$k < \mathbf{U}_i[dst].\text{size}$}{
            $m \gets \mathbf{U}_i[dst][k]$\;
            \If{$\neg m.\text{sent}$}{
                $blocked \gets \neg(\forall j\neq i \land j \neq dst, (m.\mathbf{DV}[j] \leq \mathbf{LPR}_i[j]))$\;
                \If{$\neg blocked$}{
                    $m.\text{sent} \gets \text{True}$\;
                    $\text{send}_i(dst, \text{MSG}, m)$\; \Comment{Exclude $m.\mathbf{DV}, m.\mathbf{LMS}$ fields from MSG}
                }
                \Else{
                    \textbf{break}\;
                }
            }
            $k \gets k + 1$\;
        }
    }
}

\BlankLine
\Fn{\sendPermits{}}{
    \For{$dst \gets 1$ \KwTo $n$}{
        \While{$\mathbf{Q}_i[dst] \neq \text{empty}$}{
            $m \gets \mathbf{Q}_i[dst].head$\;
            \If{$\forall j\neq i \land j \neq dst, (m.\mathbf{LMS}[j] < \mathbf{U}_i[j][0].mid \mbox{ or } \mathbf{U}_i[j][0] =\perp)$}{
                $\text{send}_i(dst, \text{PERMIT}, m.mid)$\;
                $\mathbf{Q}_i[dst].\text{remove}()$\;
            }
            \Else{\textbf{break}\;
            }
        }
    }
}

%\begin{comment}
\BlankLine
\Fn{\Receive{$src, \text{MSG}, m$}}{
    \If{$m.\text{mid} \le \mathbf{LD}_i[src]$}{
        \textbf{drop} $m$\;
        \Return \Comment{Drop duplicate or fake message}
    }
    \If{$\neg m.\text{per}$}{
        %\If{$\mathbf{P}_i[src].\text{size} \ge P_{max}$}{
        %    \Comment{Optimistic Eviction of oldest permit}
        %    $\mathbf{P}_i[src].\text{remove}()$\;
        %}
        $\mathbf{LPR}_i[src] \gets m.mid$\;
        %$\mathbf{P}_i[src].\text{add}(m)$\;
    }
    $\mathbf{LD}_i[src] \gets m.\text{mid}$\;
    $\text{send}_i(src, \text{ACK}, m.mid)$\;
    $\text{deliver}_i(m.\text{pl})$\;
}
%\end{comment}

% \columnbreak % --- Forces the next block into the right column ---

\BlankLine
\Fn{\Receive{$src, \text{ACK}, m\_id$}}{
    \If{$m\_id  < \mathbf{U}_i[src][0].mid$}{
        \Return \Comment{Already dequeued on ACK receipt or optimistically evicted}
    }
    \If{$m\_id > \mathbf{U}_i[src].last.mid$}{
        \Return \Comment{High numbered fake ACK}
    }
    \While{$\mathbf{U}_i[src] \neq \mbox{ empty }$}{
        \If{$\mathbf{U}_i[src][0].\text{mid} \leq m\_id$}{
            $m \gets \mathbf{U}_i[src].\text{remove}()$\;
            \If{$m.\text{sent} \land m.\text{per}$}{
                \If{$\mathbf{Q}_i[src].size \geq Q_{max}$}{
                    $m' \gets \mathbf{Q}_i[src].\text{remove}()$\;
                    $\text{send}_i(src, \text{PERMIT}, m'.mid)$\;
                }
                $\mathbf{Q}_i[src].\text{add}(m)$\;
            }
        }
        \Else{\textbf{break}\;
        }
    }
    \sendPermits{}; \Comment{Unblocks pending PERMITs}
}

\BlankLine
\Fn{\Receive{$src, \text{PERMIT}, m\_id$}}{
    \If{$\mathbf{LD}_i[src] \geq m\_id > \mathbf{LPR}_i[src]$}{
        $\mathbf{LPR}_i[src] \gets m\_id$\;
        \sendMessages{}; \Comment{Unblocks pending messages}
    }
}

\end{multicols}
\end{algorithm*}

\section{Correctness Proof}
\label{sec:proof}
%To prove the validity of our proposed algorithm, we must first establish some criteria that define correctness. 

%\subsection{Criteria}
%Traditional causal delivery algorithms strive for strong safety (if $send(m_1) \rightarrow send(m_2)$, then any process that delivers both must deliver $m_1$ before $m_2$) and liveness (if an honest process sends a message to an honest process, it will eventually be delivered).

%However, in a purely asynchronous system subject to Byzantine failures, strong safety and liveness are mathematically against each other, especially when space is bounded. A Byzantine process can exploit strong safety to permanently halt honest processes, thus violating liveness. Therefore, o
%Our algorithm does not guarantee strong safety, instead it guarantees cryptographic integrity, and liveness. The algorithm purposefully allows occasional strong safety violations, however this is done to foil any attempts by Byzantine processes to violate liveness. This is consistent with the impossibility result of \cite{DBLP:journals/tpds/MisraK24}. 

% Our algorithm does not guarantee strong safety, and instead, it guarantees liveness. The algorithm purposefully allows occasional strong safety violations. This is done to foil any attempts by Byzantine processes to violate liveness. This is consistent with the impossibility result of \cite{DBLP:journals/tpds/MisraK24}. 

Our proposed protocol does not guarantee strict, unconditioned causal safety under all execution traces; instead, it prioritizes deterministic \textit{Liveness}. As established  by the foundational impossibility result of Misra and Kshemkalyani 
\cite{DBLP:journals/tpds/MisraK24}, deterministically maintaining strong safety while guaranteeing bounded memory and unconditional liveness without using cryptography is mathematically impossible in an asynchronous system vulnerable to Byzantine faults. 
Consequently, our algorithm deliberately permits localized, transient violations of causal safety under specific adversarial conditions or extreme network asynchronous delays. This tactical trade-off is specifically engineered to neutralize state-pinning and buffer-exhaustion strategies deployed by Byzantine actors to induce system-wide deadlocks. However, Congestion-Relaxed Causal Delivery is always guaranteed.

\subsection{Liveness Guarantee}
%The primary objective of this Isolated-Buffer Optimistic algorithm is to guarantee that no Byzantine process can halt any honest process in an asynchronous system. 

\begin{theorem}[Liveness under Continuous Traffic]
\label{th:liveness}
Under the assumption that honest processes continuously generate application messages (or use a timeout-driven synchronous variant), an honest process will never be permanently blocked from sending or delivering messages, and messages sent between honest processes will eventually be delivered.
\end{theorem}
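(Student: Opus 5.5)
The plan is to split liveness into three claims and prove them in order. First, no honest process is blocked from \emph{causalSend}. Second, every message $m$ in $\mathbf{U}_i[dst]$ at an honest $i$ is eventually network-sent. Third, every network-sent message from honest $i$ to honest $dst$ is eventually delivered.

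The first claim is immediate from the code. \emph{causalSend} contains no waiting: when $\mathbf{U}_i[dst]$ is full it removes the head and, if needed, evicts the head of $\mathbf{Q}_i[dst]$. It then appends and calls \emph{SendMessages}, whose loops run over finite queues. Likewise \emph{receive} of MSG, ACK and PERMIT executes bounded loops and never waits. So no handler at an honest process can be stalled by any input a Byzantine process supplies. The local-blocking half of the statement therefore reduces to termination of \emph{SendMessages} and \emph{SendPermits}, and each iteration of their loops either advances $k$, removes an element, or breaks.

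The second claim is the core, and the continuous-traffic assumption is what drives it. Fix an honest $i$ and an unsent $m$ in $\mathbf{U}_i[dst]$. Two events take $m$ out of the unsent state: (a) \emph{SendMessages} finds it unblocked, or (b) $m$ reaches the head of $\mathbf{U}_i[dst]$ while the queue is full and a new \emph{causalSend} to $dst$ occurs. Event (b) forces a network-send on the eviction path. The difficulty is that $m$ may never be unblocked if some $j$ withholds its PERMIT. This includes the case where $j$ is Byzantine, or $j$ is honest but its $\mathbf{Q}_j$ entry waits on acknowledgments from a Byzantine peer.

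I would therefore use the assumption that $i$ keeps issuing \emph{causalSend}s to $dst$ (or a timeout-driven dummy send in the synchronous variant). Each causalSend to a full queue removes exactly one head element. By FIFO discipline of $\mathbf{U}_i[dst]$, at most $M_{\max}$ further causalSends to $dst$ suffice to bring $m$ to the head and evict it. The eviction branch sets $m.sent$ and calls $\text{send}_i$. If $m$ was already sent, it leaves $\mathbf{U}_i[dst]$ either by ACK or by eviction; no resend is needed because the channel is reliable. The same counting argument shows entries of $\mathbf{Q}_i[dst]$ are eventually released as PERMITs, after at most $Q_{\max}$ further insertions. This guarantees that honest receivers' $\mathbf{LPR}$ values advance, which is needed so other honest senders are not pinned forever either.

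The third claim follows from the reliable FIFO channel and the MSG handler. An honest $i$ assigns strictly increasing $mid$ values from $ck_i$. It network-sends messages to $dst$ in queue order: \emph{SendMessages} scans in order and breaks at the first blocked message, and eviction takes the head. The MSG handler at honest $dst$ delivers unconditionally unless $m.mid \le \mathbf{LD}_{dst}[i]$. I need to show this duplicate check never drops a genuine message from honest $i$. The check is indexed by $src$, and the channel authenticates the sender, so Byzantine traffic cannot raise $\mathbf{LD}_{dst}[i]$; together with monotone $mid$s in send order this gives the result. I expect the main obstacle to be the second claim: the eviction path must strictly make progress despite adversarial ACK and PERMIT withholding. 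That requires an inductive invariant on queue lengths and the head position of $m$, and it requires handling ACKs that dequeue elements ahead of $m$. Such ACKs only help, since they shorten the distance of $m$ to the head.
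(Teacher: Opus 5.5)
Your proposal is correct and follows essentially the same route as the paper's proof. Admission into $\mathbf{U}_i[dst]$ never blocks, continued \emph{causalSend}s to $dst$ force head evictions that network-send any message still held back by withheld PERMITs, and reliable FIFO channels with increasing $mid$ values ensure the receiver's replay filter never drops a genuine message. Your additions go beyond what the paper proves explicitly. The invariant that sent messages form a prefix of $\mathbf{U}_i[dst]$ (so network-send order equals $mid$ order) and the explicit bound of at most $M_{\max}$ further \emph{causalSend}s make rigorous steps the paper only asserts. Your $\mathbf{Q}_i$ remark is not needed for the theorem, because each honest sender's own $\mathbf{U}$-evictions already suffice.
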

\begin{proof}
To establish this guarantee, we first trace the continuous timeline of an application message $m$ generated by a correct process $i$ and destined for another correct process $j$:

\begin{enumerate}
    \item \textbf{Buffer Admission Timeline}: When the application layer at process $i$ invokes \textit{causalSend}(j, p), the message $m$ is allocated a local sequence number $m.mid = ck_i$ and appended to the tail of the isolated application buffer $\mathbf{U}_i[j]$. If $\mathbf{U}_i[j]$ has reached its maximum capacity $M_{\max}$, the network eviction logic immediately removes the oldest entry $stale\_m$ from the head of the queue. Because $j$ is correct, any such $stale\_m$ is either immediately dispatched over the point-to-point network (if it was unsent) or, if it was already sent and requires a permit ($stale\_m.sent \wedge stale\_m.per$), it is transitioned into the missing permits queue $\mathbf{Q}_i[j]$. If $\mathbf{Q}_i[j]$ is also saturated ($\ge Q_{\max}$), a cascading eviction forces the removal of its oldest entry, immediately dispatching a unicast $\text{PERMIT}$ to $j$ to clear the channel. Thus, a correct process $i$ is never blocked from admitting a new payload into its local tracking structures.
    
    \item \textbf{Transmission Timeline}: To exit $\mathbf{U}_i[j]$ and enter the network interconnect via \textit{sendMessages}, a message $m$ must satisfy the unblocking condition: $\forall x \neq i \land x \neq j, m.\mathbf{DV}[x] \le \mathbf{LPR}_i[x]$. Because the network layer operates via reliable point-to-point channels, and because every other honest process $k$ systematically processes valid incoming payloads and returns corresponding acknowledgments, the dependency values stored within $m.\mathbf{DV}$ represent past delivered states. 
    
    If a Byzantine process $b$ maliciously suppresses control signals (such as ACKs or PERMITs) to freeze this condition, the unblocking condition may remain unsatisfied. However, if the application at process $i$ continues to generate messages to destination $j$, the queue $\mathbf{U}_i[j]$ will eventually saturate. Once its capacity reaches $M_{\max}$, the optimistic network eviction mechanism forcibly sends the head message $m$ immediately over the point-to-point network regardless of the unsatisfied blocking condition, setting $m.sent \gets \text{True}$. Similarly, if a blocked permit is held in $\mathbf{Q}_i[j]$, a cascading eviction upon queue saturation will forcibly dispatch the permit. 
    
    {\em Deadlock and Flow-Control Trade-off:} We note that this eviction-based liveness mechanism relies on the assumption that honest processes continuously generate application traffic. If the application is reactive (e.g., processes only send responses after receiving deliveries) or enforces a window-based credit flow control where the max active credits are fewer than $\min(M_{\max}, Q_{\max})$, the queues may never reach capacity to trigger evictions, resulting in a system-wide deadlock. To resolve this without traffic assumptions, the system can be modeled as synchronous using local timeouts to trigger evictions, thereby preserving both liveness and weak safety during periods of normal network latency. Near termination time, this same modeling proves useful.
    
    \item \textbf{Ingestion and Delivery Timeline}: The message travels over the point-to-point FIFO channel to the correct receiver $j$, triggering \textit{receive}. Since sender $i$ is correct, its local clock $ck_i$ increments monotonically without gaps, ensuring that $m.mid > \mathbf{LD}_j[i]$ (which starts at $-1$). The message passes the replay filter, the local causal history is updated ($\mathbf{LD}_j[i] \gets m.mid$), and the payload is successfully passed to the application layer via \textit{deliver}(m.pl). 
    
    \item \textbf{Cleanup Timeline}: Upon delivery, process $j$ immediately transmits a unicast $\text{ACK}$ bearing $m.mid$ back to process $i$ over the point-to-point network. Upon ingestion at process $i$, \textit{receive} locates the matching record in $\mathbf{U}_i[j]$ and purges it, completing the message lifecycle and reclaiming local space.
\end{enumerate}

A Byzantine process $b$ may attempt to disrupt this timeline and break system liveness using three distinct ways, which our algorithm successfully neutralizes:

\begin{itemize}
    \item \textbf{Withholding $\text{ACK}$s}: The adversarial objective is to cause memory exhaustion at the sender by refusing to acknowledge received messages, thereby locking resources. Our algorithm isolates this attack to the queue $\mathbf{U}_i[b]$. When $\mathbf{U}_i[b].\text{size} \ge M_{\max}$, a network eviction purges the head entry, freeing local memory and ensuring communication with honest peers continues unhindered.
    \item \textbf{Withholding $\text{PERMIT}$s}: The adversarial objective is to exploit the SPS invariant by withholding permission clearances, hoping to cause permanent causal blocking at honest senders. If an outbound queue $\mathbf{U}_i[dst]$ becomes full because its head message $m$ is blocked by a dependency on $b$ ($m.\textbf{DV}[b] > \mathbf{LPR}_i[b]$), the optimistic eviction architecture forcefully clears the stall by evicting and network-sending $m$. Sending downstream $\text{PERMIT}$s remains unaffected. 
    \item \textbf{Future Message Flooding}: The adversarial objective is to exhaust the receiver's memory by flooding it with massive, out-of-sequence message streams. Because the \textit{receive} procedure evaluates messages instantly against $\mathbf{LD}_i$ and never buffers out-of-order or future-dated payloads into localized holding structures, incoming garbage data is dropped or processed immediately without consuming queue capacity.
\end{itemize}

Across all adversarial scenarios, strictly enforced local space boundaries act as deterministic thresholds that trigger automatic, optimistic, or cascading evictions. These event-driven evictions forcefully resolve causal stalls and preserve processing fluidity within the application data plane. Combined with the reliable delivery guarantees of the underlying point-to-point channels between honest processes, the architecture mathematically ensures that no configuration of Byzantine processes can indefinitely freeze or exhaust the resources of an honest process, fully satisfying the liveness property for honest-to-honest communication.
\end{proof}

\subsection{Congestion-Relaxed Causal Delivery Guarantee}

\begin{theorem}[Congestion-Relaxed Causal Delivery]
Algorithm~\ref{alg:byzantine_sps_fifo_gc} guarantees Congestion-Relaxed Causal Delivery (Definition~\ref{def:crcd}) for all communications between honest processes in the system.
\end{theorem}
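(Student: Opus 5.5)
The proof handles the two clauses of Definition~\ref{def:crcd} separately: FIFO Order is a direct consequence of the algorithm, and Bounded Causal Integrity is reduced to an SPS-style argument along the causal path.

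\textbf{FIFO Order.} A correct sender $i$ assigns strictly increasing $mid$ values from $ck_i$. It appends to $\mathbf{U}_i[dst]$ in that order. Both \textit{sendMessages} (which scans $\mathbf{U}_i[dst]$ from the head and stops at the first blocked unsent entry) and the eviction path in \textit{causalSend} (which always removes the head) network-send messages to $dst$ in increasing $mid$ order. Combined with the reliable FIFO channels, the correct receiver $j$ receives them in sending order. Each one passes the replay filter $m.mid > \mathbf{LD}_j[i]$ and is delivered immediately. So delivery order equals causalSend order.

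\textbf{Bounded Causal Integrity.} I would argue by induction along the Byzantine causal path $m_{i_1},\ldots,m_{i_x}$, all sent by correct processes. The invariant to maintain is this: at the moment $i_x$ network-sends $m_{i_x}$ to $i_0$, message $m_{i_0}$ has already been delivered at $i_0$.

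The base link comes from condition (a) together with the ACK/PERMIT logic at $i_1$. Process $i_1$ causalSends $m_{i_0}$ to $i_0$ before $m_{i_1}$. Hence, when $m_{i_1}$ is created, $\mathbf{U}_{i_1}[i_0]$ is nonempty, so $m_{i_1}.per$ is set and $m_{i_1}.\mathbf{LMS}[i_0] \ge m_{i_0}.mid$.

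Next I show that any PERMIT for $m_{i_1}$ from $i_1$ is issued only after $m_{i_0}$ has been ACKed, i.e.\ after it has been delivered at $i_0$. There are two routes by which such a PERMIT can be issued.
\begin{itemize}
\item The \textit{sendPermits} route requires $m_{i_1}.\mathbf{LMS}[i_0] < \mathbf{U}_{i_1}[i_0][0].mid$ (or an empty queue). Condition (a) guarantees $m_{i_0}$ was not evicted from $\mathbf{U}_{i_1}[i_0]$, so it left only through a genuine ACK from correct $i_0$.
\item The cascading-eviction route out of $\mathbf{Q}_{i_1}[i_2]$ is excluded by condition (b).
\end{itemize}

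Then condition (c) says $i_2$ received $PERMIT(m_{i_1})$ before network-sending $m_{i_2}$, so $m_{i_0}$ is delivered at $i_0$ before $m_{i_2}$ leaves $i_2$. From there, delivery precedes sending along each later hop. For the remaining hops $i_2,\ldots,i_x$ I would propagate this by transitivity of real time: each $m_{i_k}$ is network-sent after $m_{i_{k-1}}$ is delivered, which is after $m_{i_{k-1}}$ is sent, and so on. Finally, since $m_{i_x}$ is delivered at $i_0$ after it is network-sent, $m_{i_0}$ is delivered first. The case where $m_{i_1}$ does not require a permit ($\neg m_{i_1}.per$) cannot arise for this path by the argument above. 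It is listed in the definition only for generality, so I would dispatch it by noting that $req\_per$ is true.

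\textbf{Main obstacle.} The delicate step is the base link, specifically justifying that the PERMIT for $m_{i_1}$ is emitted only after $m_{i_0}$ is dequeued from $\mathbf{U}_{i_1}[i_0]$ by ACK. Three points need care here.
\begin{itemize}
\item Fake high-numbered ACKs must be ruled out. This is immediate because $i_0$ is correct and channels are reliable.
\item The $\mathbf{LMS}$ snapshot must correctly record the last message sent to $i_0$, with $mid$s increasing per destination.
\item I must handle the subtle point that condition (c) concerns the PERMIT arriving before $i_2$ sends, rather than the $\mathbf{LPR}$-based blocking test in \textit{sendMessages}. An eviction-driven send at $i_2$ would violate (c) and is thereby excluded by hypothesis.
\end{itemize}
I would first try to phrase this as a small lemma about \textit{sendPermits} before assembling the induction.
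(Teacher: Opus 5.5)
Your proof follows the paper's structure. FIFO comes from the channels and the $\mathbf{LD}$ filter; your observation that both \textit{sendMessages} and the eviction path in \textit{causalSend} network-send each $\mathbf{U}_i[dst]$ in $mid$ order is a step the paper leaves implicit. For Bounded Causal Integrity you use the same two PERMIT routes and the snapshot bound $m_{i_1}.\mathbf{LMS}[i_0] \ge m_{i_0}.mid$. You combine it with (a) to conclude that $m_{i_0}$ left $\mathbf{U}_{i_1}[i_0]$ through a genuine ACK, then apply (c) and real-time chaining along the path.

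The genuine gap is your dismissal of the case $\neg m_{i_1}.per$. The claim that $\mathbf{U}_{i_1}[i_0]$ is nonempty when $m_{i_1}$ is created does not follow from $i_1$ causal-sending $m_{i_0}$ first. Before the application invokes \textit{causalSend} for $m_{i_1}$, $m_{i_0}$ may already have been network-sent, delivered at $i_0$, and removed from $\mathbf{U}_{i_1}[i_0]$ by its ACK. If at that moment every $\mathbf{U}_{i_1}[x]$ with $x \neq i_2$ is empty, $req\_per$ evaluates to False. So this case genuinely occurs, and your argument does not cover it. In that case $m_{i_1}$ never enters $\mathbf{Q}_{i_1}[i_2]$, no PERMIT for it is ever sent, and conditions (b) and (c) hold vacuously. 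Also, $i_2$'s blocking test for $i_1$ can be cleared without any PERMIT, since \textit{receive} advances $\mathbf{LPR}_{i_2}[i_1]$ directly for messages with $\neg per$. The step where you invoke (c) to place the network-send of $m_{i_2}$ after the delivery of $m_{i_0}$ therefore has nothing to rest on.

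The repair is the one the paper makes in its closing remark. If $\neg m_{i_1}.per$, then $req\_per$ was False when $m_{i_1}$ was created, so $\mathbf{U}_{i_1}[i_0]$ was empty at that moment (when $i_0 \neq i_2$; otherwise FIFO order already gives the claim). Hence $m_{i_0}$ had already left that queue, and by (a) it left via an ACK from the correct $i_0$. So $m_{i_0}$ was delivered before $m_{i_1}$ was even created. Your real-time chain, in which each $m_{i_k}$ is network-sent after $i_k$ delivers $m_{i_{k-1}}$, then finishes the proof without using (b) or (c). Equivalently, split on whether $m_{i_0}$ is still in $\mathbf{U}_{i_1}[i_0]$ when $m_{i_1}$ is created; if it is, $m_{i_1}.per$ is True and your argument applies verbatim.

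A smaller point concerns your last obstacle. An eviction-driven send at $i_2$ does not by itself violate (c), since (c) only constrains when $\text{PERMIT}(m_{i_1})$ arrives relative to the network-send of $m_{i_2}$. Once (c) is assumed, the mechanism by which $i_2$ sends is irrelevant.
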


\begin{proof}
We prove each property of Definition~\ref{def:crcd}:
\begin{enumerate}
    \item \textbf{FIFO Order}: Since honest processes communicate over reliable point-to-point FIFO channels (e.g., TCP), and the receiver process $j$ processes and delivers messages from sender $i$ in sequence number order ($m.mid > \mathbf{LD}_j[i]$), the FIFO property is naturally preserved.
    
    \item \textbf{Bounded Causal Integrity}: %From Condition (2.a), we have that $i_1$ receives an ACK for $m_{i_0}$ from $i_{0}$ before $m_{i_0}$ is dequeued from $U_{i_1}[0]$ and from Conditon (2.b) at $i_1$, $i_1$ sends a normal (and not eviction-triggered) PERMIT to $i_2$, which implies $m_{i_0}$ has already been dequeued from $U_{i_1}[i_0]$. Thus $m_{i_0}$ has already been delivered to $i_0$ at the time $i_1$ sends a PERMIT to $i_2$ and before $i_2$ receives the PERMIT.
    Under the algorithm, when $i$ attempts to send PERMIT for $m$ via \textit{SendPermits()}, the PERMIT is blocked from transmission if $\exists x \neq i \land x \neq dst, m.\mathbf{LMS}[x] \ge \mathbf{U}_i[x][0].mid$. Thus, PERMIT($m$) can only be released to the network if either:
    \begin{itemize}
        \item The unblocking condition is satisfied, meaning $\forall x \neq i \land x \neq dst, m.\mathbf{LMS}[x] < \mathbf{U}_i[x][0].mid$ or $\mathbf{U}_i[x][0] = \perp$. This implies that $i$ has already received all required ACKs for messages it sent (to any $y$) before sending $m$ (or those messages were forcibly evicted from $\mathbf{U}_i[y]$ --- but that cannot happen for the case in Condition 2(a) of Definition~\ref{def:crcd} where $i=i_1$ and $y=i_0$). 
        \item Or, process $i$ experienced queue eviction in $\mathbf{Q}_i[x]$ and was forced to evict $m$, which will not happen for the case in Condition 2(b) of Definition~\ref{def:crcd} where $i=i_1$ and $x=i_2$.
    \end{itemize}

    Now from Condition 2(b), if process $i_1$ (the sender of $m_{i_1}$) has {\em causalSend} less than $Q_{max}$ messages to $i_2$ since inserting  $m_{i_1}$ in $Q_{i_1}[i_2]$ until dequeuing it from $Q_{i_1}[i_2]$, then $i_1$ sends a non-eviction driven PERMIT for $m_{i_1}$ to $i_2$. This implies from {\em SendPermits}() that $m_{i_1}.LMS[i_0] < U_{i_1}[i_0].mid$ or $U_{i_1}[i_0][0] = \perp$ which in turn implies that $m_{i_0}$ had been dequeued normally (or forcibly evicted) from $U_{i_1}[i_0]$. 

    However, Condition 2(a) implies that $m_{i_0}$ had been normally dequeued from $U_{i_1}[i_0]$ (on receiving an ACK from $i_0$) and not evicted. Therefore $m_{i_0}$ had been delivered to $i_0$ and its ACK received by $i_1$ before $i_1$ sends a PERMIT for $m_{i_1}$ to $i_2$. (Note, $i_1$ may have network-sent $m_{i_1}$ already and it may be delivered to $i_2$ already but that is okay.) 
    
    $i_2$ will not network-release $m_{i_2}$ to $i_3$ until receiving the PERMIT for $m_{i_1}$ from $i_1$, by Condition 2(c). 
    %--- if $m_{i_2}$ is not evicted from $\mathbf{U}_{i_2}[i_3]$. And $m_{i_2}$ will not be evicted by Condition 2(a). 
    Thus $m_{i_2}$ is network-sent by $i_2$ to $i_3$ after $m_{i_0}$ had been delivered to $i_0$.
    
    It therefore follows that, down the causal chain, $m_{i_{x}}, x\geq 2$ will not be delivered to $i_{x+1}$ before $m_{i_0}$ has been delivered to $i_0 (= i_{x+1})$. Bounded Causal Integrity follows. 

    The above logic assumed $m_{i_1}.per =$ True. If False, then $m_{i_0}$ was delivered to $i_0$ and its ACK received by $i_1$ before $i_1$ {\em causal-send} $m_{i_1}$ to $i_1$; from this Bounded Causal Integrity naturally follows.
    %
    %From Condition 2(b), each process $i_a, (2 \leq a \leq x-1)$ %network-sends $m_{i_a}$ 
    %sends a non-eviction-driven PERMIT for $m_{i_{a}}$ to $i_{a+1}$, implying it has already received an ACK for $m_{i_{a}}$ from $i_{a+1}$. Further, before it network-sent $m_{i_{a}}$, $i_a$ has received the PERMIT for $m_{i_{a-1}}$. Transitively, this implies $i_a$ network-sends $m_{i_a}$ after $m_{i_0}$ has been delivered to $i_0$.

    %Further, $i_x$ dequeues $m_{i_{x}}$ from $U_{i_x}[i_{x+1}]$ normally implying it has already received an ACK for it from $i_{x+1}$, but at any rate $m_{i_x}$ has been network-sent (and hence delivered to $i_{x+1} = i_0$) after $m_{i_0}$ has been delivered to $i_0$.  

    %Thus if no process evicts more than the stipulated number of messages, the safety condition $\forall x \neq i \land x \neq dst, m_2.\mathbf{DV}[x] \le \mathbf{LPR}_i[x]$ holds, ensuring $i_0$ never delivers $m__{i_x}$ before $m_{i_0}$. Hence, Bounded Causal Integrity is satisfied.
\end{enumerate}
The theorem follows.
\end{proof}

\subsection{Safety Trade-offs}

\begin{table*}%[t]
\centering
\begin{center}
%{\small
\begin{minipage}{\textwidth}
\begin{center}
\begin{tabular}{|l|l|l|}
%\toprule %
\hline
 & Attack by Byzantine $b$ & Analysis of impact on correct processes. \\
%\midrule %
\hline 
\hline

1 & $b$ witholds ACK to $i$ & eventually $\textbf{U}_i[j]$ fills up; $m$ dequeued \& sent, added to $\textbf{Q}_i[j]$;  \\
  & $i$ has queued an $m$ in $\textbf{U}_i[j]$ & eventually $\textbf{Q}_i[j]$ fills up, $m$ evicted and PERMIT sent to $j$. {\bf No impact}.\\
\hline

2 & $b$ sends early ACK for $m1$ to $i$,  & $j$ sends early sends $m2$ to $b$, $b$ may receive $m2$ before $m1$ \\
 &  this causes $i$ to send early $m$ and PERMIT to $j$ & {\bf No impact.} Strong safety violation possible only at $b$. \\
 &  which $j$ receives, $j$ queues $m2$ in $\textbf{U}_i[b]$  & \\
\hline
%2 & $b$ sends early ACK to $i$ & $i$ sends early PERMIT to $j$. {\bf Strong safety violation possible at $b$}.\\
%  & $i$ has queued an $m$ in $U_i[j]$ & \\
%\hline

3 & $b$ witholds PERMIT to $i$ & eventually $\textbf{U}_i[j]$ fills up; $m$ dequeued \& sent, added to $\textbf{Q}_i[j]$; \\
  & $i$ has queued an $m$ in $\textbf{U}_i[j]$ & eventually $\textbf{Q}_i[j]$ fills up, $m$ evicted and PERMIT sent to $j$. {\bf No impact}. \\
\hline

4 & $b$ sends $m_{b0}$ to $j$, then $m_{b1}$ to $i$, & $i$ sends early $m$ to $j$. $j$ receives $m$ before $m_{b0}$. \\
  & $b$ sends early PERMIT for $m_{b1}$ to $i$,  & {\bf Strong safety violation (messages from $b$ delivered later) possible.}\\
 & $i$ receives $m_{b1}$ and PERMIT, queues an $m$ in $\textbf{U}_i[j]$  & \\
\hline

5 & $b$ sends MSG with low $mid$ to $j$ & $j$ deletes on reception. {\bf No impact.}\\
\hline

6 & $b$ sends MSG with high $mid$ to $j$ & may affect its own FIFO channel to $j$. {\bf No impact}.\\
\hline

7 & $b$ sends ACK with low $mid$ to $i$ & Equiv. to witholding ACK (Attack 1). {\bf No impact.}  \\
\hline

8 & $b$ sends ACK with high $mid$ to $i$ & Equiv. to sending early ACK (Attack 2). \\
  & $\textbf{U}_i[b][0].mid < mid \leq \textbf{U}_i[b].last.mid$ & {\bf No impact.} Strong safety violation possible only at $b$. \\
\hline

9 & $b$ sends ACK with high $mid$ to $i$ & Equiv. to witholding ACK (Attack 1). {\bf No impact.}\\
  & $mid > \textbf{U}_i[b].last.mid$ & \\
\hline 

10 & $b$ sends PERMIT with low $mid$ to $i$ & Ignored by $i$. Equiv. to witholding PERMIT (Attack 3). {\bf No impact.}  \\
\hline

11 & $b$ sends PERMIT with high $mid$ to $i$ & Equiv. to sending early PERMIT (Attack 4). \\
  & $\textbf{LD}_i[b] \geq mid > \textbf{LPR}_i[b]$ & {\bf Strong safety violation (message from $b$ delivered later) possible.} \\
\hline

12 & $b$ sends PERMIT with high $mid$ to $i$ & Equiv. to witholding PERMIT (Attack 3). {\bf No impact.}\\
  & $mid > \textbf{LD}_i[b]$ & \\
\hline 

13 & $i$ sends $m_{i0}$ to $j$, then $m_{i1}$ to $b$, $b$ receives $m_{i1}$; & $j$ may receive $m_{b}$ before $m_{i0}$. \\
  & (a) $b$ updates $m'.pl$ for unsent $m'$ in $U_b[*]$ based on   & {\bf Strong safety violation (messages from $b$ delivered earlier) possible.}\\
 & $m_{i1}.pl$ \& sends the $m'$ without their required PERMITs;  & Also, causalSend concurrently (with $m_{i1}$) $m'$ may see $m_{i1}.pl$ at dests. $*$.  \\
 & (b) $b$ without waiting for PERMIT for $m_{i1}$ sends $m_{b}$ to $j$ & \\
\hline

%\bottomrule 
\end{tabular}
\end{center}
\end{minipage}
%}
\end{center}
%\end{minipage}
%\medskip
\caption{Attacks by a Byzantine process $b$ on correct processes and an analysis of their impact. $i,j$ are correct processes.
}
\label{table:attacks}
\end{table*}

Traditional causal delivery algorithms require \textit{Strong Safety}, which dictates that for any message $m' \in CP(m)$ destined for the same correct process, the receiver must never deliver $m$ before $m'$. %\cite{DBLP:journals/tpds/MisraK24}. 
However, %as proven by Theorem 8 of Misra and Kshemkalyani, 
it is impossible to deterministically guarantee both strong safety and liveness for unicasts without using cryptography in an asynchronous Byzantine-faulty system \cite{DBLP:journals/tpds/MisraK24}. Because an adversary can weaponize strict safety constraints to permanently halt honest nodes, our algorithm purposefully relaxes these ordering bounds under physical capacity pressure to preserve global system liveness. 

%The different possible safety attacks are shown in 
As classified in 
Table~\ref{table:attacks}, the behavioral consequences of the safety relaxations fall into three distinct analytical categories based on the structural boundaries of the causal past.

\begin{enumerate}
    \item \textbf{Self-Contained Violations (Attacks 2, 8)}: The Byzantine process reports a premature acknowledgment for a message it has not yet processed. While this allows the honest sender to advance its local queue state, any subsequent out-of-order delivery is contained entirely within the internal execution environment of the Byzantine process $b$. Because an adversary can inherently choose to arbitrarily permute or reorder its private application-level state transitions, this scenario does not constitute a true breach of network-wide causal integrity.
    
    \item \textbf{Adversarial Sequence Shifting (Attacks 4, 11)}: In this scenario, a permit is forged or prematurely cleared for an application message $m_{b1}$ whose causal predecessor is $m_{b0}$ ($send_b(m_{b0}) \rightarrow send_b(m_{b1})$). When an honest process $i$ delivers $m_{b1}$ and subsequently invokes \textit{causalSend} to transmit message $m$ to a common destination $j$, the standard happens-before relation dictates that $m_{b0} \rightarrow m_{b1} \rightarrow m$, meaning $m_{b0} \in CP(m)$. If process $j$ delivers $m$ before $m_{b0}$ arrives, this technically constitutes a violation of \textit{Strong Safety}. However, as the attacking Byzantine process's message $m_{b0}$ arrives later, it cannot gain any gaming application advantage at $j$, so this attack cannot be considered a harmful breach of network-wide causal integrity.
    
    Morever, this sequence shifting \textit{does not violate Weak Safety}. Because the causal link $m_{b0} \rightarrow m_{b1}$ occurs as a local state transition within the Byzantine process $b$, it fails the structural preconditions of the Byzantine happens-before relation ($\stackrel{B}{\rightarrow}$), which strictly requires all processes executing transitions along the path to be correct. Consequently, $m_{b0} \not\stackrel{B}{\rightarrow} m_{b1}$, which excludes the Byzantine message from the Byzantine causal past of the honest message ($m_{b0} \notin BCP(m)$). %This demonstrates that the causal consistency of honest nodes remains perfectly intact under $\stackrel{B}{\rightarrow}$, as the broken dependency chain is localized entirely within the adversary's manipulated workspace.
    
    \item \textbf{True Safety Breaches (Attack 13)}: This represents a structural safety violation where an active dependency chain is explicitly broken over the wire, such that $send(m_{i0}) \rightarrow send(m_b)$ but the receiver $j$ delivers the adversarial payload $m_b$ before the honest payload $m_{i0}$. Because network-layer cryptography cannot validate or inspect semantic dependencies that a Byzantine actor intentionally erases within its private application workspace prior to transmission, this violation is mathematically unpreventable in an asynchronous setting using any deterministic solution and must be accepted to maintain liveness.
\end{enumerate}

\subsubsection{Weak Safety Violation under Severe Asynchronous Network Delays}
While adversarial manipulation can only induce Strong Safety anomalies on Byzantine-derived paths, physical buffer limitations can occasionally force a true violation of \textit{Weak Safety} among completely honest processes under extreme asynchronous network conditions.

Consider a fully non-adversarial execution trace involving three correct processes: $i$, $j$, and $k$. Process $i$ transmits an application message $m_1$ to process $k$, and then queues a causally dependent message $m_2$ for process $j$, establishing the relation $send_i(m_1) \rightarrow send_i(m_2)$. Because all participating nodes are correct, this sequence strictly satisfies the parameters of the Byzantine happens-before relation, yielding $m_1 \stackrel{B}{\rightarrow} m_2$. Process $j$ subsequently delivers $m_2$ and sends a new message $m''$ to process $k$, which transitively embeds $m_1$ into its Byzantine causal past.

If the communication link between $i$ and $k$ experiences severe physical network congestion or extreme asynchronous scheduling latency, the acknowledgment ($\text{ACK}$) for $m_1$ will be indefinitely delayed. From the isolated perspective of the honest sender $i$, this persistent silence is structurally indistinguishable from a malicious acknowledgment withholding attack (Attack 1). As process $i$ continues to process subsequent application demands, its isolated queues will inevitably saturate, reaching the absolute capacity thresholds ($M_{\max}, Q_{\max}$). 

To guarantee system-wide progress and prevent a permanent local deadlock, process $i$ is forced to execute an optimistic network eviction. This action forcefully removes the unacknowledged tracking entries and dispatches a premature $\text{PERMIT}$ over the point-to-point network to unblock the channel to $j$. If process $j$ ingests this $\text{PERMIT}$ and immediately transmits $m''$ to process $k$, process $k$ may receive and deliver $m''$ before the severely delayed message $m_1$ physically arrives at its network interface. 

Because $m_1 \in BCP(m'')$ and both messages are sent to the same correct process $k$, delivering $m''$ before $m_1$ constitutes a weak safety violation. However, such a violation occurs with very low probability because the SPS dependencies of the stalled permits (which are $M_{\max} + Q_{\max}$ messages old or older on the same channel from correct $i$ to correct $j$) would have been satisfied by then due to the long-tail properties of the network latency distribution. So the protocol accepts this rare weak safety anomaly as a necessary mathematical trade-off; prioritizing absolute ordering safety over severe asynchronous delay would require unbounded local queue structures ($M_{\max} \to \infty$), which would instantly reintroduce catastrophic vulnerabilities to Byzantine memory exhaustion and distributed denial-of-service flooding. In any case, {\em Congestion-Relaxed Causal Delivery} is guaranteed.

\section{Complexity Analysis}
\label{sec:complexity}

Our process-isolated optimistic causal delivery protocol is engineered to combine resilient Byzantine fault containment with highly efficient network resource consumption. By completely avoiding multi-round cryptographic consensus primitives and heavy global broadcasts, the architecture maintains an efficient pure unicast footprint. We evaluate the algorithm's complexity profile across four dimensions: network message metadata overhead, communication message complexity, local memory space complexity, and computational execution time, as shown in Table~\ref{table:comparison}.
A detailed analysis is given in Appendix~\ref{ap:complexity}.

\section{Discussion and Conclusions}
\label{sec:conc}
In this paper, we presented the first point-to-point unicast causal delivery protocol that tolerates Byzantine failures in a purely asynchronous environment while preserving a strict $O(1)$ network message space overhead. By introducing the {\em Congestion-Relaxed Causal Delivery} abstraction, we established a robust mathematical bridge between absolute liveness and relaxed causal safety bounds.

Our protocol offers several distinct advantages over existing approaches:
\begin{itemize}
    \item \textbf{Optimal Network Efficiency}: Unlike the existing algorithms that append $O(n)$ or $O(n^2)$ dependency vectors to message payloads and have $O(n^2)$ or $O(n^3)$ message complexity, our protocol preserves a strict $O(1)$ wire overhead, maximizing network throughput in large-scale systems.
    \item \textbf{Decoupling and Fault Isolation}: By routing all messages over point-to-point unicast channels and managing state via isolated outbound queues ($\mathbf{U}_i$) and missing permit queues ($\mathbf{Q}_i$), we completely isolate process faults. A malfunctioning or Byzantine node cannot exhaust the resources or freeze the channels of unrelated correct processes.
    \item \textbf{High Scalability}: The event-driven, localized nature of our signaling plane results in an amortized computational execution cost of just $O(n)$ per message, avoiding heavy global consensus overhead during normal execution.
\end{itemize}

A key conceptual takeaway of our work is the inevitability of safety trade-offs in Byzantine settings. As shown by Misra and Kshemkalyani~\cite{DBLP:journals/tpds/MisraK24}, guaranteeing both Strong Safety and liveness deterministically without using cryptography is impossible in asynchronous Byzantine networks. Even if one were to attempt enforcing the  \textit{Weak Safety} guarantee, detecting it in practice is fundamentally unfeasible: verifying Weak Safety requires identifying which processes are correct and which are Byzantine, a task that is itself impossible to achieve in an asynchronous system. 
Consequently, our newly introduced {\em Congestion-Relaxed Causal Delivery} abstraction is not only theoretically sound but practically essential. It provides a formal, observable safety guarantee that is useful for application developers: causal ordering is guaranteed for all honest-to-honest communication paths unless the specified bounds are exceeded. Crucially, this threshold is completely under the control of the application designer, who can set the queue capacities $M_{\max}$ and $Q_{\max}$ to match their system's memory constraints and congestion tolerances. By raising these boundaries, designers can arbitrarily reduce the probability of weak safety violations at a low, predictable local memory cost of $O(n^2)$. 
%Despite fine-tuning queue capacities, if $\mathbf{U}_i$ ($\mathbf{Q}_i$) often have eviction because expected PERMITs (ACKs) from a/some certain $p_j$ never arrive preventing normal network-sending of MSGs (normal sending of PERMITs), that $p_j$ is suspected to be Byzantine; a rational $p_j$ would therefore not behave in that manner.
Despite fine-tuning queue capacities, if $\mathbf{U}_i[k]$ often have eviction because expected PERMITs (ACKs) from a/some certain $p_j$ (from $p_k$) never arrive preventing normal network-sending of MSGs in $\mathbf{U}_i[k]$ (normal dequeuing from $\mathbf{U}_i[k]$), those $p_j$ ($p_k$) are suspected to be Byzantine; a rational $p_j$ ($p_k$) would therefore not behave in that manner.
Likewise, if $\mathbf{Q}_i$ often have eviction because expected ACKs from a/some certain $p_j$ never arrive preventing normal sending of PERMITs, those $p_j$ are suspected to be Byzantine; a rational $p_j$ would therefore not behave in that manner.

Ultimately, this work demonstrates that the trade-off between buffer capacities and transient ordering relaxations represents a highly viable and optimized paradigm for secure, high-performance distributed systems.

% \bibliographystyle{plain}
%\bibliography{references} 

%\input{ICDCN_body_Revised}
%\input{DISCbody_Revised}

\sloppy

\balance
%\vspace*{-8pt}
\bibliographystyle{ACM-Reference-Format}
\bibliography{references} 

\appendix
\section{Usage of Data Structures}
\label{ap:ds}

\subsubsection{Local Dependency Vector ($DV$)}
The Dependency Vector ($\mathbf{DV}$) is a $1$-dimensional array of size $n$ with natural numbers embedded within the internal structure of each message record $m$. Its primary function is to maintain a local, point-in-time snapshot of the highest message sequence numbers delivered to process $i$ from every peer at the moment of message initialization. Crucially, the $\mathbf{DV}$ remains completely local to the executing process as it is stored within the unacknowledged application queues $\mathbf{U}_i$ but is explicitly omitted from the transmitted network payload. 

By preventing the transmission of variable-sized metadata over the wire, the protocol preserves a constant $O(1)$ network message size overhead. However, because each of the $n$ isolated buffers can hold up to $M_{\max}$ records—with each record caching an $n$-sized array—the local space complexity scales to $O(n^2)$. Within the \textit{sendMessages} procedure, the $\mathbf{DV}$ enforces causal ordering by blocking the transmission of any message whose dependencies have not been cleared by the current permit state, requiring that $\forall j \neq i \land j \neq dst, m.\mathbf{DV}[j] \le \mathbf{LPR}_i[j]$.

\subsubsection{Last Message Sent Vector $(LMS_i)$}
The Last Message Sent vector ($\mathbf{LMS}_i$) is a $1$-dimensional tracking array of size $n$ used to log the local sequence number ($mid$) of the most recent message transmitted from process $i$ to each respective peer. This vector is instrumental in preserving the constraints of the Sender Permission to Send (SPS) invariant on the sender side. 

During the execution of $\textit{causalSend}$, a localized snapshot of $\mathbf{LMS}_i$ is copied directly into the message record as $m.\mathbf{LMS}$, capturing the exact outbound causal history across all channels at the moment of creation. Following this snapshot, $\mathbf{LMS}_i[dst]$ is advanced to $m.mid$. The core utility of this snapshot occurs within the \textit{sendPermits} procedure, where $m.\mathbf{LMS}$ is evaluated against the current heads of the outbound queues $\mathbf{U}_i$ to ensure that all causal predecessors sent to other destinations prior to $m$ have stabilized/been delivered before a permit is released.

%The core utility of this vector occurs within the $\textit{receive(ACK)}$ handler. When an $\text{ACK}$ arrives for message $m$, the algorithm checks $m.\mathbf{LMS}$ to verify whether all causal predecessors sent to other destinations prior to $m$ have stabilized. This safety condition is confirmed by ensuring that for all channels $j$, the value of $m.\mathbf{LMS}[j]$ is strictly less than the sequence number of the message currently at the head of queue $\mathbf{U}_i[j]$ (or that the queue is empty), preventing premature permit generation.

\subsubsection{Isolated Application Queues ($\mathbf{U}_i$)}
Outbound application messages destined for peer processes are segregated into an array of isolated FIFO queues, $\mathbf{U}_i[1..n]$, each strictly bounded by a maximum capacity constraint $M_{\max}$. This architectural isolation establishes robust fault containment. If a Byzantine process maliciously suppresses $\text{ACK}$ control signals, memory consumption is strictly confined to the buffer $\mathbf{U}_i[dst]$ dedicated to that specific channel, ensuring parallel communication channels with honest peers remain completely unaffected. 

When an isolated queue reaches its capacity boundary, the protocol executes an optimistic network eviction, removing the oldest stale message from the head of the queue to force local progress. If the evicted message %was previously transmitted and carried a permit requirement ($stale\_m.sent \wedge stale\_m.per$), 
carried a permit requirement ($stale\_m.per$), it undergoes a cascading capacity check against the signaling plane buffers ($\mathbf{Q}_i$) to ensure memory is reclaimed before the state is advanced.

\subsubsection{Isolated Missing Permits Queues ($\mathbf{Q}_i$)}
The Missing Permits Queue array, $\mathbf{Q}_i[1..n]$, consists of $n$ independent FIFO buffers bounded by a maximum capacity constraint $Q_{\max}$. This data structure explicitly decouples acknowledgment processing from permit generation. When an inbound $\text{ACK}$ confirms that a message requiring a permit has been delivered, or when such a message is optimistically evicted from $\mathbf{U}_i$, the record is transferred into $\mathbf{Q}_i[src]$. 

The message record remains securely buffered within this isolated memory space until the sender's local outbound history satisfies the stability invariants evaluated by \textit{sendPermits}. If $\mathbf{Q}_i[src]$ reaches its capacity limit $Q_{\max}$, the algorithm executes a defensive cascading eviction, forcefully removing the oldest pending record from the head of $\mathbf{Q}_i[src]$ and dispatching a proactive $\text{PERMIT}$ over the wire to prevent local memory exhaustion.

\begin{comment}
\subsubsection{Eviction Tracker ($ET_i$)}
The Eviction Tracker ($\mathbf{ET}_i$) is a $1$-dimensional array of size $n$ of natural numbers that records the high-water mark of message sequence numbers that have been optimistically evicted for each destination channel. During a network eviction event in the \textit{causalSend} procedure, $\mathbf{ET}_i[dst]$ is updated to the sequence number ($mid$) of the discarded head message. %using a $\max$ function to cache the sequence number ($mid$) of the discarded head message. 
This structure serves as a critical safety filter within the signaling plane: if a delayed or retransmitted $\text{ACK}$ subsequently arrives from a slow or adversarial process, its sequence number is validated against $\mathbf{ET}_i[src]$. Any acknowledgment satisfying $m\_id \le \mathbf{ET}_i[src]$ is immediately discarded, preventing the system from processing control information for historical states that have already been optimistically cleared. 
\end{comment}

\subsubsection{Last Delivered Message IDs ($LD_i$)}
The Last Delivered vector ($\mathbf{LD}_i$) is a $1$-dimensional array of size $n$ tracking the sequence number of the most recent valid application message delivered from each peer process in the network. Serving as the node's local causal ground truth, $\mathbf{LD}_i$ is utilized during the execution of $\textit{causalSend}$ to populate the $\mathbf{DV}$ fields of newly initialized outbound payloads. Additionally, $\mathbf{LD}_i$ provides immediate protection against adversarial replay attacks within the message ingestion pipeline. Upon receiving an incoming payload, the \textit{receive} procedure verifies that $m.mid > \mathbf{LD}_i[src]$. Any message failing this check is classified as a duplicate or an active forgery and is immediately dropped, preserving the integrity of the delivery layer.

\subsubsection{Last Permit Received Vector $(LPR_i)$}
The Last Permit Received vector ($\mathbf{LPR}_i$) is an array of sequence numbers tracking the highest causal authorization received from each peer process. This vector acts as a localized causal ceiling, ensuring the preservation of the SPS invariant without requiring a globally synchronized consensus plane. A process monotonically advances its $\mathbf{LPR}_i[src]$ value when an incoming message arrives with a disabled permit flag ($m.per = \text{False}$), indicating that the payload carries no unresolved upstream dependencies, or upon the explicit receipt of a dedicated $\text{PERMIT}$ control packet. 
$\mathbf{LPR}_i[src]$ value is also advanced even if there are unresolved upstream dependencies that $p_i$ is forced to ignore because of overflow of some $\mathbf{U}_i[x]$ queue.
This vector is continuously evaluated by \textit{sendMessages} to determine whether buffered outbound payloads can be safely released to the network.

\section{Description of Algorithm Procedures}
\label{ap:procs}

\subsubsection{\textit{causalSend(dst, p)}} 
The \textit{causalSend} procedure is executed when the application layer dispatches a payload $p$ to a designated target process $dst$. This procedure handles capacity validation, network eviction, and the local logging of causal history vectors prior to transmission.

\begin{enumerate}
    \item \textbf{Buffer Capacity and Network Eviction}: Before constructing the outbound message, the algorithm verifies the current memory footprint of the targeted application buffer, $\mathbf{U}_i[dst]$. If the buffer size meets or exceeds the maximum capacity threshold $M_{\max}$, a network eviction is executed. The oldest message ($stale\_m$) is systematically removed from the head of $\mathbf{U}_i[dst]$. %, and the local eviction tracker is updated via $\mathbf{ET}_i[dst] \gets \max(\mathbf{ET}_i[dst], stale\_m.mid)$ to ensure late-arriving control signals for this entry are safely ignored.
    \item \textbf{Forced Transmission and Permit Cascading}: If the evicted $stale\_m$ has not yet been transmitted over the network ($\neg stale\_m.sent$), it is flagged as sent ($stale\_m.sent \gets \text{True}$) and immediately dispatched over the point-to-point network via $\text{send}_i(dst,\- \text{MSG}, stale\_m)$. 
    The $\mathbf{LPR}_i$ vector is also updated using the dependency vector of $stale\_m$ ($\forall x, \mathbf{LPR}_i[x] \gets \max(\mathbf{LPR}_i[x], stale\_m.\mathbf{DV}[x])$) even if some of those dependencies were not satisfied but had to be ignored because of queue overflow. This allows blocked messages in other queues to be network-sent.
    %If $stale\_m$ was already transmitted but remains unacknowledged while holding an active permit dependency ($stale\_m.sent \wedge stale\_m.per$), a 
    If $stale\_m$ holds an active permit dependency $(stale\_m.per)$, a 
    cascading capacity check is performed on the corresponding missing permits queue $\mathbf{Q}_i[dst]$. If this queue is saturated ($\ge Q_{\max}$), its head element $m'$ is removed and a proactive $\text{PERMIT}$ message carrying $m'.mid$ is transmitted over the point-to-point network to unblock the destination. The evicted $stale\_m$ is then appended to $\mathbf{Q}_i[dst]$. Following this sequence, $\textit{sendPermits}()$ is invoked to evaluate and flush pending permits.
    \item \textbf{Message Initialization and Snapshotting}: Once buffer space is secured, the algorithm determines the permit requirement flag ($req\_per$), setting it to $\text{True}$ if any alternative application queue contains pending unacknowledged traffic ($\exists x \neq dst : \mathbf{U}_i[x].\text{size} > 0$). A new message record $m$ is initialized with the target destination, the current sequence number $ck_i$, the calculated permit flag, and the payload. The local clock $ck_i$ is then incremented. Crucially, the message captures a point-in-time snapshot of the node's local causal history by copying $\mathbf{LD}_i$ into its local Dependency Vector ($m.DV$) %($\mathbf{U}_i[dst].\text{last}.\mathbf{DV}_i$) 
    and logging the outbound channel states $\mathbf{LMS}_i$ into $m.\mathbf{LMS}$. %$\mathbf{U}_i[dst].\text{last}.LMS_i$. 
    \item \textbf{Pipeline Execution}: Finally, the local tracking entry is updated ($\mathbf{LMS}_i[dst] \gets m.mid$), and $\textit{sendMessages}()$ is executed to attempt an immediate, optimistic transmission of the newly buffered payload.
\end{enumerate}

\subsubsection{\textit{sendMessages()}}
The \textit{sendMessages} procedure is the primary component governing message transmission, responsible for validating that outbound payloads satisfy the SPS invariant before they are sent over the point-to-point network.

The procedure loops systematically across all process-isolated application channels from $dst = 1$ to $n$. For each isolated buffer $\mathbf{U}_i[dst]$, the algorithm processes messages sequentially starting from index $k=0$. For every unsent record ($\neg m.sent$), it evaluates whether the payload is causally constrained by verifying its embedded dependency history against the local permit vector:
$ \mathbf{blocked} \gets \neg \left( \forall j \neq i \land j \neq dst, \left( m.\mathbf{DV}[j] \le \mathbf{LPR}_i[j] \right) \right) $.

If the message is unblocked ($\neg \mathbf{blocked}$), the algorithm sets $m.sent \gets \text{True}$ and transmits the packet as a unicast $\text{MSG}$ message over the point-to-point network. Note however, that the $m.\mathbf{DV}$ and $m.\mathbf{LMS}$ fields of $m$ should not be included in $\text{MSG}$. If the safety check fails, indicating that the message depends on a causal predecessor that has not yet been authorized or cleared by an incoming permit, the evaluation loop for that specific channel immediately executes a $\textbf{break}$ statement. Because strict point-to-point FIFO ordering must be preserved within each channel, a single blocked message at the head of a queue halts the transmission of all subsequent records in that isolated buffer, preventing causal fractures without affecting unrelated destination channels.

\subsubsection{\textit{sendPermits}()}
The \textit{sendPermits} procedure manages the systematic release of pending causal \textit{PERMIT}s stored within the isolated permit queues back to peer processes via point-to-point unicast messaging.

The procedure loops sequentially through all destination channels $dst \in [1..n]$. For each channel, an evaluation loop runs as long as the respective permit queue is non-empty ($\mathbf{Q}_i[dst] \neq \emptyset$). The algorithm peeks at the message record $m$ situated at the head of the queue ($\mathbf{Q}_i[dst].\text{head}$) and subjects it to a strict outbound causal stability check:
$ \forall j \neq i \land j \neq dst, \quad \left( m.\mathbf{LMS}[j] < \mathbf{U}_i[j][0].mid \quad \vee \quad \mathbf{U}_i[j][0] = \bot \right) $.

This condition mathematically verifies that all historical application messages transmitted by process $i$ to any other peer $j$ prior to the generation of $m$ have either been explicitly acknowledged by their recipients or cleared via network eviction. If this stability check is satisfied, it is safe to unblock the target process; a unicast $\text{PERMIT}$ message bearing $m.mid$ is sent to $dst$ over the point-to-point network, and the record is permanently popped from $\mathbf{Q}_i[dst]$. If the stability condition is not met, the procedure triggers a $\textbf{break}$ for that channel, preserving strict FIFO pipeline behavior within the signaling queue.

\subsubsection{\textit{receive(src, MSG, m)}}
This procedure handles the real-time ingestion of point-to-point application messages ($\text{MSG}$) arriving from a source node $src$. Given the Byzantine network context, incoming payloads must pass through sequencing filters before application-level delivery occurs.

\begin{enumerate}
    \item \textbf{Replay and Duplicate Protection}: Upon message arrival, the algorithm executes an immediate sequence check against the local ground truth vector, evaluating whether $m.mid \le \mathbf{LD}_i[src]$. If true, the message is identified as a stale duplicate or an active Byzantine replay attack; the payload is immediately dropped, and the procedure terminates, preserving at-most-once delivery constraints.
    \item \textbf{Causal Ceiling Optimization}: If the incoming message is verified as valid and carries a disabled permit requirement flag ($\neg m.per$), it implies that the sender has no outstanding unpermitted causal dependencies on other channels. The receiver immediately optimizes its control state by advancing its authorization vector: $\mathbf{LPR}_i[src] \gets m.mid$. This update instantly elevates the causal ceiling for outgoing messages awaiting clearance from $src$.
    \item \textbf{State Synchronization and Acknowledgment}: To conclude the ingestion sequence, the receiver records the progress by setting $\mathbf{LD}_i[src] \gets m.mid$. A unicast control acknowledgment ($\text{ACK}$) containing the sequence index $m.mid$ is then returned to the source node over the point-to-point network to permit buffer reclamation at the sender. Following state synchronization, the raw payload is delivered to the local application layer via \textit{deliver}.
\end{enumerate}

\subsubsection{\textit{receive(src, ACK, m\_id)}}
This procedure is executed upon receiving an acknowledgment control message ($\text{ACK}$) from a process $src$ confirming the delivery of a previously transmitted message $m\_id$.

The algorithm first deploys dual defense filters to protect buffer states from malformed or out-of-order signaling flags:
\begin{enumerate}
    \item \textbf{Eviction or Replay Filtering}: If $m\_id < \mathbf{U}_i[src][0].mid$, the incoming signal corresponds to a historical record that has already been cleared from memory via an optimistic network eviction or after already receiving an $\text{ACK}$. This received (duplicate) $\text{ACK}$ is discarded immediately.
    \item \textbf{Sequence Inflation Defense}: The procedure verifies if $m\_id > \mathbf{U}_i[src].\text{last}.mid$. If true, the incoming acknowledgment references a message sequence number that was never generated or transmitted by process $i$, indicating an active Byzantine sequence inflation attack. The control signal is dropped.
\end{enumerate}

If the acknowledgment passes both filters, the algorithm processes the isolated application queue $\mathbf{U}_i[src]$ while it is non-empty. As long as the oldest message at index 0 satisfies $\mathbf{U}_i[src][0].mid \le m\_id$, it is sequentially removed from the queue. For each removed message $m$, if it was transmitted and required a permit ($m.sent \wedge m.per$), the algorithm manages its transition to the permit tracking structure. If $\mathbf{Q}_i[src].\text{size} \ge Q_{\max}$, a defensive eviction pops the oldest entry $m'$ from the head of $\mathbf{Q}_i[src]$ and transmits a proactive permit carrying $m'.mid$ over the point-to-point network. The message $m$ is then appended to the tail of $\mathbf{Q}_i[src]$. If an entry does not satisfy the loop condition, the loop breaks. Finally, the procedure calls \textit{sendPermits}() to evaluate and release any unblocked permits.

\subsubsection{\textit{receive(src, PERMIT, m\_id)}}
This procedure processes incoming authorization control messages ($\text{PERMIT}$) dispatched by a peer process $src$ to clear outstanding causal barriers.

To prevent adversarial nodes from manipulating local execution timelines or inducing safe-state regressions, the permit index is subjected to a strict bounding verification check:
$\mathbf{LD}_i[src] \ge m\_id > \mathbf{LPR}_i[src]$.

The upper boundary ($\mathbf{LD}_i[src] \ge m\_id$) guarantees that a permit is only processed if it corresponds to an application message that has already been received and delivered locally by process $i$. This prevents Byzantine processes from forcing state transitions based on future-dated or fabricated control data. The lower boundary ($m\_id > \mathbf{LPR}_i[src]$) ensures strict structural monotonicity, causing the algorithm to ignore stale or redundant permits. 

Once validated, the local permit tracking vector is advanced by setting $\mathbf{LPR}_i[src] \gets m\_id$. The procedure then immediately invokes \textit{sendMessages}. This re-evaluates the dependency status of all outbound application messages currently stalled across the isolated queues $\mathbf{U}_i$, allowing the process to immediately flush newly unblocked payloads onto the data plane.

\section{Detailed Complexity Analysis}
\label{ap:complexity}

\subsubsection{Message Space Overhead}

A primary design objective of the protocol is the minimization of causal metadata transmitted over the network wire. Every application packet ($\text{MSG}$) transmitted across point-to-point FIFO channels carries only a fixed, constant number of scalar fields, specifically: the target destination identifier ($rcv$), the local sequence index ($mid$), a single boolean permit requirement flag ($per$), and the application payload ($pl$). 

Crucially, as specified in the structural definitions of the message record, Local Dependency Vector ($m.\mathbf{DV}$) and the outbound causal history snapshot ($m.\mathbf{LMS}$) are maintained strictly within local memory buffers. Because these vectors are explicitly omitted from the serialized network payload, the message space overhead on the wire remains strictly $O(1)$ with respect to the network size $n$.

\subsubsection{Message Complexity (Number of Messages)}

The protocol operates exclusively on a single point-to-point data and signaling plane, executing all state updates over unicast communication channels. The message complexity generated per application-level transmission invocation is bounded as follows:
\begin{itemize}
    \item Exactly one application message packet ($\text{MSG}$) is unicast from the sender to the target destination. Size is $O(1)$.
    \item Exactly one control acknowledgment ($\text{ACK}$) is returned from the receiver to the sender to enable local buffer clearance. Size is $O(1)$.
    \item At most one explicit control permission message ($\text{PERMIT}$) is generated and unicast to the receiver to authorize downstream deliveries and satisfy the constraints of the Sender Permission to Send (SPS) invariant. Size is $O(1)$.
\end{itemize}
Because all tracking, verification, and clearance operations are handled strictly via localized, point-to-point unicast messaging, the message complexity per application message send is strictly $O(1)$.

It also follows that the communication word complexity is $O(1)$.

\subsubsection{Local Space Overhead}
To achieve perfect fault isolation and prevent adversarial actors from triggering head-of-line blocking or resource exhaustion on honest channels, the protocol organizes its tracking parameters into process-segregated data structures, resulting in an $O(n^2)$ local memory footprint:
\begin{itemize}
    \item \textbf{Global Tracking Arrays}: Each node maintains flat, tracking vectors for basic accounting, including the %Eviction Tracker ($\mathbf{ET}_i$), 
    Last Delivered message IDs ($\mathbf{LD}_i$), the Last Permit Received ceiling ($\mathbf{LPR}_i$), and the local Last Message Sent pointers ($\mathbf{LMS}_i$). Combined, these structures consume $3 \times O(n) = O(n)$ space.
    \item \textbf{Isolated Queue Infrastructures}: Outbound traffic and pending permissions are managed across an array of $n$ independent channels for the unacknowledged message buffers ($\mathbf{U}_i[1..n]$) and the missing permit buffers ($\mathbf{Q}_i[1..n]$). These queues are strictly bounded by the physical capacity constants $M_{\max}$ and $Q_{\max}$, respectively.
    \item \textbf{Embedded Vector Overhead}: Every individual message record cached within the $\mathbf{U}_i$ or $\mathbf{Q}_i$ arrays embeds two localized, $n$-sized vectors ($m.\mathbf{DV}$ and $m.\mathbf{LMS}$). 
\end{itemize}
Consequently, with $n$ distinct destination channels maintaining a combined capacity bounded by $(M_{\max} + Q_{\max})$ records, and each record caching an $n$-sized tracking state, the total local space complexity scales as $O(n \times (M_{\max} + Q_{\max}) \times n) = O(n^2)$. This quadratic memory commitment represents the necessary structural trade-off to achieve a constant $O(1)$ message overhead while ensuring complete isolation against cross-channel Byzantine disruptions.

\subsubsection{Computational Time Complexity}

The computational footprint is centered around buffer ingestion operations and the event-driven evaluation of causal safety invariants:
\begin{itemize}
    \item \textbf{Queue Management Operations}: Basic enqueue and dequeue operations within the isolated buffers $\mathbf{U}_i$ and $\mathbf{Q}_i$ execute in strict $O(n)$ time relative to $n$, as all queue limits are structurally capped by the constants $M_{\max}$ and $Q_{\max}$.
    \item \textbf{Data Plane Causal Evaluation}: The \textit{sendMessages} procedure systematically reviews the unacknowledged arrays across all $n$ destination channels. For each channel, validating whether the head payload is causally blocked requires verifying its embedded dependency history against the local permit vector ($\forall j \neq i \land j \neq dst, m.\mathbf{DV}[j] \le \mathbf{LPR}_i[j]$), taking $O(n)$ time. Evaluating the heads of all isolated queues thus consumes $O(n^2)$ time in the worst-case pass.
    \item \textbf{Signaling Plane Invariant Verification}: The \textit{sendPermits} procedure inspects the missing permit queues sequentially across all $n$ channels. Assessing whether a cached permit can be safely transmitted requires evaluating its snapshotted outbound history against the active queue elements ($\forall j \neq i \land j \neq dst, m.\mathbf{LMS}[j] < \mathbf{U}_i[j][0].mid \vee \mathbf{U}_i[j][0] = \bot$), which demands an $O(n)$ vector comparison per channel, scaling the entire pass to $O(n^2)$ time.
\end{itemize}
Importantly, these evaluation procedures do not execute continuously; they are strictly event-driven, triggered only upon localized occurrences such as the physical generation of a new application packet, or the ingestion of an incoming unicast $\text{ACK}$ or $\text{PERMIT}$ control signal. Because these passes are bounded by the constant capacity ceilings ($M_{\max}, Q_{\max}$) and execute only when state transitions occur, the amortized computational overhead remains highly scalable for high-throughput messaging environments.

\begin{theorem}[Amortized time complexity]
\label{th:amortizedtime}
The amortized time complexity of Algorithm~\ref{alg:byzantine_sps_fifo_gc} is $O(n)$.
\end{theorem}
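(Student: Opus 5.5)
We argue by an aggregate (potential-style) accounting over an arbitrary execution prefix. Let $E$ be the number of events at process $i$, where an event is a \textit{causalSend} invocation or the receipt of a MSG, ACK, or PERMIT. We bound the total work in the prefix by $O(n)\cdot E$ plus lower-order terms.

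\textbf{Handlers without scans.} First we classify the work done inside each handler, excluding the calls to \textit{sendMessages} and \textit{sendPermits}. The bodies of \textit{causalSend} and of the three \textit{receive} handlers each perform $O(1)$ queue operations. They also perform at most a constant number of $n$-sized vector operations: copying $\mathbf{LD}_i$ and $\mathbf{LMS}_i$ into $m.\mathbf{DV}$ and $m.\mathbf{LMS}$, the $\max$-update of $\mathbf{LPR}_i$ on eviction, and the $req\_per$ test over the $n$ queues. The exception is the while loop in \textit{receive}(ACK), whose iterations each dequeue a record from $\mathbf{U}_i[src]$. Every record is enqueued in $\mathbf{U}_i$ exactly once, by its own \textit{causalSend}, so we charge each dequeue to that \textit{causalSend}. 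The same argument covers moves into and out of $\mathbf{Q}_i$. Hence these parts cost $O(n)$ amortized per event.

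\textbf{Charging the scans.} The main step is to bound the scans. In \textit{sendMessages}, each inner-loop iteration over $\mathbf{U}_i[dst]$ either visits an already-sent record, sends a record (an $O(n)$ test charged to that record's single transmission), or performs one failing $O(n)$ test followed by \textbf{break}. In \textit{sendPermits}, each iteration either releases a PERMIT (an $O(n)$ test charged to that record, which leaves $\mathbf{Q}_i$ once) or fails once per channel. Since every record is sent at most once and permitted at most once, all successful tests together cost $O(n)$ per \textit{causalSend}.

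\textbf{Main obstacle.} The hard part is the failing tests, together with the revisits of already-sent records in \textit{sendMessages}. A single pass can incur up to $n$ failing $O(n)$ tests, giving $O(n^2)$. The plan is to avoid rescanning channels whose blocking status could not have changed. We would establish that the head blocked record of $\mathbf{U}_i[dst]$ can only become unblocked after an increase of some $\mathbf{LPR}_i[j]$. Likewise, the head of $\mathbf{Q}_i[dst]$ can only become releasable after a change of some head $\mathbf{U}_i[j][0]$. Each such change is caused by a single event and touches one coordinate $j$. We would then maintain, per channel, the index of the first unsent record and a counter of violated coordinates. With this bookkeeping a failing test is re-evaluated only for coordinates that changed, and each coordinate change costs $O(n)$ over all channels. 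Already-sent prefixes are skipped via the index. Summed, this is $O(n)$ per event.

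Because the capacities $M_{\max}$ and $Q_{\max}$ are constants, queue lengths never exceed a constant, so no term grows beyond this. Dividing the total cost by $E$ yields the $O(n)$ amortized bound, consistent with the event-driven argument given just before the statement. If the incremental bookkeeping cannot be justified for the algorithm exactly as written, we would present it as an implementation refinement that preserves the algorithm's behavior.
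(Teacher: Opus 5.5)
Your accounting of the handler bodies and of the \emph{successful} tests is sound. The step that actually carries the theorem, charging the failing $O(n)$-time tests, is left as a plan, however, and for Algorithm~\ref{alg:byzantine_sps_fifo_gc} as written it cannot succeed. The reason is that \textit{sendMessages} (and likewise \textit{sendPermits}) re-sweeps all $n$ channels on every invocation, whatever changed. Here is a concrete execution:
\begin{itemize}
\item Some $j$ delivers to $i$ a message with $per=\text{True}$ and withholds or delays the matching PERMIT, so $\mathbf{LPR}_i[j]<\mathbf{LD}_i[j]$.
\item Process $i$ then \textit{causalSend}s one message to each $dst\neq j$, so every such $\mathbf{U}_i[dst]$ has an unsent head blocked on coordinate $j$.
\item Another process $b$ repeatedly sends $i$ a MSG followed by a PERMIT. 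Each PERMIT passes $\mathbf{LD}_i[b]\ge m\_id>\mathbf{LPR}_i[b]$ and triggers a full \textit{sendMessages} pass with $n-2$ failing tests. Each test scans $\Theta(n)$ coordinates when $j$ is inspected last, and nothing is sent, dequeued or permitted.
\end{itemize}
So half of all events cost $\Theta(n^2)$ indefinitely, and no charging argument yields $O(n)$ amortized for this pseudocode. Your fallback (indices, violated-coordinate counters, ready sets) proves a statement about a different implementation.

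The paper does not attempt a worst-case bound at all. It gives an \emph{expected} amortized cost under a uniform steady-state traffic model, where a channel's queue is non-empty at a trigger with probability $O(1/n)$. One sweep then costs $\sum_{dst}[(1-O(1/n))O(1)+O(1/n)\cdot O(n)]=O(n)$ in expectation. The same balancing is asserted for \textit{sendMessages}, plus $O(N)$ total for skipping sent records, and each message accounts for a constant number of triggers. That probabilistic hypothesis, or some assumption limiting how many channels are blocked when a sweep fires, is the ingredient your proposal is missing.

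If you instead present your result as a statement about a refined implementation, you would get a deterministic worst-case amortized bound. That is a stronger guarantee than the paper's expected one, though for a modified algorithm. Two points then need repair.
\begin{itemize}
\item Your claim ``each such change touches one coordinate $j$'' is false for the eviction in \textit{causalSend}. The update $\forall x,\ \mathbf{LPR}_i[x]\gets\max(\mathbf{LPR}_i[x],stale\_m.\mathbf{DV}[x])$ may raise $\Theta(n)$ coordinates in a single event. Under your per-coordinate $O(n)$ propagation that event costs $\Theta(n^2)$, and evictions can occur on every \textit{causalSend}. This is fixable by a global count: MSG- and PERMIT-driven increases touch one coordinate each. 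An eviction-driven increase sets $\mathbf{LPR}_i[x]$ to a value previously held by $\mathbf{LD}_i[x]$, so by monotonicity there are at most as many of these on coordinate $x$ as MSG receipts from $x$.
\item Your counters are valid only if a satisfied coordinate never becomes violated again. This holds: $\mathbf{LPR}_i$ is non-decreasing, and a record entering an empty $\mathbf{U}_i[j]$ has a $mid$ exceeding every stored $\mathbf{LMS}$ entry. But it must be argued.
\end{itemize}
Finally, the revisits of already-sent records are not an obstacle: with $M_{\max}$ constant they cost $O(n)$ per pass.
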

\begin{proof}
To formally establish that the event-driven computational overhead of our protocol scales at an expected amortized complexity of $O(n)$ per application message, we evaluate the aggregate operational history using a steady-state probabilistic analysis over an execution sequence of $N$ messages. Inline buffer management operations inside the \textit{causalSend} and \textit{receive} procedures require simple array indexing and execute in strict $O(1)$ time due to the constant physical capacities $M_{\max}$ and $Q_{\max}$. For the signaling plane, a single invocation of \textit{sendPermits} systematically loops across all $n$ process channels. A permit record only resides in $\mathbf{Q}_i[dst]$ during the transient window between when its application payload's acknowledgment is processed and when its outbound history stabilizes. This outbound history stabilization is defined as the specific operational boundary where the sending node processes the network acknowledgments for all causal dependencies preceding that payload, ensuring it is safe for downstream correct processes to release dependent transmissions without breaching the Sender Permission to Send (SPS) invariant. The lifespan of this un-stabilized transient phase is directly bounded by the network round-trip time ($\text{RTT}$). In a uniform steady-state network, traffic is distributed across all channels, meaning the probability that any specific channel's queue is non-empty at an arbitrary trigger is proportional to the round-trip latency and inversely proportional to the network scale, yielding $\Pr(\mathbf{Q}_i[dst] \neq \emptyset) = O(1/n)$. If a channel's queue is empty, the loop incurs an immediate $O(1)$ checking cost. Conversely, if the queue is non-empty, the algorithm initiates the $\forall j \neq i \land j \neq dst$ loop, expending $O(n)$ time performing vector comparisons before either processing or executing a $\textbf{break}$. Computing the mathematical expectation across the system reveals the power of asymptotic notation: the $O(1/n)$ probability of a queue being active perfectly balances out the $O(n)$ computational cost of a vector validation on that queue. The total expected workload for a full sweep across all $n$ channels is therefore bounded by $\sum_{dst=1}^{n} [ (1 - O(1/n))\cdot O(1) + O(1/n)\cdot O(n) ] = \sum_{dst=1}^{n} O(1) = O(n)$. Because each message lifecycle maps to a fixed, constant number of event-driven triggers (exactly one incoming \textit{ACK} and one \textit{PERMIT}), multiplying this expected sweep cost across the operational history yields a total expected work profile bounded by $O(N \cdot n)$. Dividing this total expenditure by the $N$ application messages processed yields an expected amortized time complexity of exactly $\alpha = \frac{O(N \cdot n)}{N} = O(n)$ per message. 

An identical balancing property regulates the evaluation of the data plane within the \textit{sendMessages} procedure, which adds a secondary operational component to skip past %stabilized, previously transmitted payloads 
previously released payloads awaiting space reclamation
where $m.sent = \text{True}$. This linear clean-up loop inspects a simple boolean flag in $O(1)$ time per entry, advancing a channel pointer until it reaches the first unreleased message ($\neg m.sent$). Because each application payload transitions to a sent state exactly once during its active lifecycle and queue lengths are structurally constrained by $M_{\max}$, the cumulative number of index advancements executed to skip past transmitted items over the entire sequence is strictly bounded by $O(N)$, contributing a constant $O(1)$ amortized cost per message. Since each message maps to a fixed, constant number of event triggers, multiplying these workloads across the lifecycle yields a total expected workload bounded by $O(N \cdot n)$. Dividing this total expenditure by the $N$ processed messages yields an expected amortized time complexity of exactly $\alpha = \frac{O(N \cdot n)}{N} = O(n)$ per message, demonstrating that the structural expansion of the network naturally dilutes the probability of active queue evaluations to counteract the linear growth of vector operations.
%$\blacksquare$
\end{proof}

\end{document}